\definecolor{blue1}{rgb}{0.0, 0.0, 1.0}
\definecolor{gray}{rgb}{0.9,0.9,0.9}
\definecolor{gray1}{rgb}{0.7,0.7,0.7}
\definecolor{gray2}{rgb}{0.8,0.8,0.8}
\definecolor{magenta}{rgb}{1.0, 0.0, 1.0}
\newcommand\backmatter{\appendix
\def\chaptermark##1{\markboth{%
\ifnum  \c@secnumdepth > \m@ne  \@chapapp\ \thechapter:  \fi  ##1}{%
\ifnum  \c@secnumdepth > \m@ne  \@chapapp\ \thechapter:  \fi  ##1}}%
\def\sectionmark##1{\relax}}
\newcommand{\ISI}[1]{\mathrm{ISI}}
\definecolor{lime}{HTML}{A6CE39}
\DeclareRobustCommand{\orcidicon}{%
    \begin{tikzpicture}
    \draw[lime, fill=lime] (0,0) 
    circle [radius=0.16] 
    node[white] {{\fontfamily{qag}\selectfont \tiny ID}};
    \draw[white, fill=white] (-0.0625,0.095) 
    circle [radius=0.007];
    \end{tikzpicture}
    \hspace{-2mm}
}
\xdef\csname orcid\x\endcsname{\noexpand\href{https://orcid.org/\csname orcidauthor\x\endcsname}{\noexpand\orcidicon}}
\newcommand{\orcid}[1]{\href{https://orcid.org/#1}{\textcolor[HTML]{A6CE39}{\orcidicon}}}
\newcommand{\uproman}[1]{\uppercase\expandafter{\romannumeral#1}}
 \journalname{Journal}
\begin{document}
\title{Reduced-order adaptive synchronization in a chaotic neural network with parameter mismatch: A dynamical system vs. machine learning approach}

\titlerunning{Reduced-order adaptive synchronization in a chaotic neural network}        

\author{Jan Kobiolka, Marius E. Yamakou\orcid{0000-0002-2809-1739}}


\institute{
J. Kobiolka, M. E. Yamakou  \at  Department of Data Science, Friedrich-Alexander-Universit\"at Erlangen-Nürnberg, Cauerstr. 11, 91058 Erlangen, Germany.\\
             \email{marius.yamakou@fau.de}\\  
}

\date{Received: date }

\maketitle
\begin{abstract}
In this paper, we address the reduced-order synchronization problem between two chaotic memristive Hindmarsh-Rose (HR) neurons of different orders using two distinct methods. The first method employs the Lyapunov active control technique. Through this technique, we develop appropriate control functions to synchronize a 4D chaotic HR neuron (response system) with the canonical projection of a 5D chaotic HR neuron (drive system). Numerical simulations are provided to demonstrate the effectiveness of this approach. The second method is data-driven and leverages a machine learning-based control technique. Our technique utilizes an \textit{ad hoc} combination of reservoir computing (RC) algorithms, incorporating reservoir observer (RO), online control (OC), and online predictive control (OPC) algorithms. We anticipate our effective heuristic RC adaptive control algorithm to guide the development of more formally structured and systematic, data-driven RC control approaches to chaotic synchronization problems, and to inspire more data-driven neuromorphic methods for controlling and achieving synchronization in chaotic neural networks \textit{in vivo}.
\keywords{Chaotic neural networks \and Adaptive control \and Reduced-order synchronization \and Lyapunov function \and  Machine learning \and Reservoir computing}
 
\end{abstract}

\section{Introduction}
Chaotic behavior is a fascinating occurrence that manifests in nonlinear systems and has garnered increasing interest in scholarly articles over the past thirty years. A chaotic system, which is nonlinear and deterministic, exhibits intricate and unpredictable behavior. The system's sensitivity to initial conditions and parameter variations are key features, making the challenge of achieving chaotic synchronization significantly important. The synchronization of chaotic systems has been mainly categorized into complete synchronization \cite{pecora1990synchronization,ma2011complete,yamakou2016ratcheting}, generalized synchronization \cite{kocarev1996generalized,yang2007three}, lag synchronization \cite{rosenblum1997phase,wang2011lag,wang2009chaotic}, anticipated synchronization \cite{calvo2004anticipated}, phase synchronization \cite{rosenblum1997phase,belykh2005automatic},
and practical synchronization \cite{femat1999chaos}. In chaotic systems, it has been shown that time delays can trigger transitions between some of these forms of synchronization \cite{shu2005switching}. 
The literature on dynamics and control of these types of synchronization is abundant, see, e.g., \cite{boccaletti2002synchronization,arenas2008synchronization,tang2014synchronization} and the references therein.

Chaos synchronization has attracted significant attention since Pecora and Carroll's seminal paper \cite{pecora1990synchronization}. Consequently, the synchronization of chaotic systems has been extensively explored due to its promising applications across multiple domains, including neuro\-sci\-ence \cite{guevara2017neural,yamakou2023synchronization}, power grids \cite{totz2020control}, secure communication \cite{kocarev1995general}, electronic circuits \cite{ma2010time}, and others. Numerous chaos synchronization techniques have been devised, including drive-response control \cite{yang1999note}, coupling control \cite{lu2002chaos}, variable structure control \cite{yin2002synchronization}, impulsive control \cite{wang2004impulsive,sun2002impulsive}, active control \cite{ho2002synchronization,yassen2005chaos}, and adaptive control \cite{han2004adaptive,liao2000adaptive,chen2002synchronization} 
which can be categorized into Lyapunov-based \cite{liao2000adaptive,vincent2009simple} and identification-based approaches \cite{krstic1995nonlinear}, depending on the parameter update law and the associated stability and convergence proofs. The identification-based approach offers greater flexibility in selecting an appropriate identifier module, allowing it to be treated as entirely distinct from the controller module. Conversely, the Lyapunov-based approach (utilized in this paper) relies on the Lyapunov stability theory to establish stability. However, a significant challenge with this method is identifying a suitable Lyapunov function.

Nonetheless, the control methods listed earlier and other control techniques primarily address the synchronization of two identical chaotic systems with either known parameters or identical unknown parameters. In practical scenarios, it is rare for the drive and response chaotic systems to have identical structures. Furthermore, the system parameters are invariably affected by external factors and are not precisely known beforehand. Hence, synchronizing two distinct chaotic systems in the presence of unknown parameters is crucial and highly beneficial in real-world applications.

Moreover, in control scenarios involving coupled systems with unknown parameters \cite{chen2002synchronization,xiong2023weak}, additional difficulties may arise in synchronizing these systems if they are not of the same order, i.e., they have different phase space dimensions) \cite{motallebzadeh2012synchronization,zhu2008full,ogunjo2013increased}.  In same-order (same phase space dimension) synchronization problems, the drive and response systems share similar geometric and topological characteristics \cite{yang1999note,chen2002synchronization,hu2005adaptive}. Consequently, a straightforward drive-response connection with a suitable coupling is typically sufficient for synchronization. Synchronizing different-order chaotic systems \cite{motallebzadeh2012synchronization,jouini2019increased,femat2002synchronization} presents a challenge due to several factors: (i) the initial conditions of the drive and response systems might be different or even unknown; (ii) the topological and geometrical properties of the two chaotic systems vary considerably; and (iii) the presence of partially or fully unknown parameters, or parameter mismatches/uncertainties between the systems can lead to entirely distinct time evolution.

As a result, the literature has predominantly focused on chaos synchronization in systems of the same dimension, and relatively fewer research works have focused on different-order synchronization problems, i.e., reduced-order synchronization \cite{femat2002synchronization,bowong2006adaptive,bowong2004stability} or increased-order synchronization \cite{ogunjo2013increased,qing2009increasing,al2011adaptive}.  There are practical scenarios where different-order systems need to be synchronized, such as the synchronization between the heart and lungs, the coordination between thalamic and hippocampal neurons, and the alignment of neuron systems with certain biomechanical systems (e.g., biological implants). Thus, studying the synchronization of different-order chaotic systems is crucial for both practical applications and control theory.

Some real-world systems are too complex to formulate the dynamical equations that govern their behaviors. For instance, there is no comprehensive equation for the brain or climate, and often the most viable approach is to rely on the data obtained through direct measurement of these complex systems \cite{brunton2022data}. This is where data-driven methods and  machine learning (ML) can come to our rescue. By leveraging ML, we can model and control complex systems in ways that would be otherwise impossible with traditional analytical methods. For example, ML can create accurate models of chaotic systems based on observational data. Techniques such as recurrent neural networks (RNNs)  can learn intricate patterns and dependencies within data, enabling the replication and prediction of chaotic behaviors without explicit mathematical formulations (dynamical equations) governing these complex systems \cite{brunton2022data,chen2023synchronization}. 

Moreover, ML can be employed in the predictive control of coupled chaotic systems. For instance, ML models like RNNs can forecast future states of chaotic systems over short periods, allowing for real-time control adjustments to achieve a desired synchronization \cite{chen2023synchronization,kent2024controlling}. Furthermore, adaptive control strategies using ML can refine control mechanisms based on real-time system feedback, enabling dynamic adjustments for evolving chaotic systems \cite{brunton2022data,nazerian2023synchronizing,platt2021forecasting}. In particular, reservoir computing (RC) \cite{jaeger2004harnessing,maass2002real,gottwald2021combining,lukovsevivcius2009reservoir} is a machine learning technique for forecasting time-series data with RNNs, where only the output layer is trained. RC has proven effective in modeling complex dynamical systems, such as reconstructing attractors \cite{lu2018attractor,grigoryeva2023learning}, computing Lyapunov exponents \cite{pathak2017using}, forecasting novel attractors \cite{rohm2021model}, and to infer bifurcation diagrams \cite{kim2021teaching}. Recent applications include enhancing climate models \cite{arcomano2022hybrid}, inferring network connections \cite{banerjee2021machine}, predicting synchronization \cite{weng2019synchronization,nazerian2023synchronizing,fan2021anticipating}, forecasting tipping points \cite{kong2021machine,patel2023using}. Following its initial success in predicting chaotic systems \cite{pathak2018model}, research has focused on understanding RC's strengths/ limitations \cite{carroll2019network,jiang2019model} and has led to advancements like hybrid \cite{pathak2018model}, parallel \cite{srinivasan2022parallel}, reservoir observer \cite{lu2017reservoir}, and even deep RC \cite{gallicchio2017deep} architectures.

In this paper, we are interested in a reduced-order chaotic synchronization problem. The objective is to build (i) a Lyapunov-based active feedback control and (ii) a machine learning control based on an \textit{ad hoc} combination of reservoir computing architectures and algorithms that enable the behavior of a 4D chaotic Hind\-marsh-Rose (HR) neuron with unknown parameters to synchronize with the canonical projection of a 5D chaotic HR neuron.

The rest of the paper is organized as follows: In Section \ref{sec:HR}, we present the mathematical equations and the complexity of the 5D HR neuron model. In Sections \ref{sec:DS} and \ref{sec:ML} we study the reduced-order chaotic synchronization problem between the 4D and the 5D HR neuron via the Lyapunov-based adaptive feedback control technique and the ML technique based on RC, respectively. In Section \ref{sec:conclusion}, we summarize and present our conclusions.

\section{Mathematical model and complexity}\label{sec:HR}

\subsection{Mathematical model}
We consider a paradigmatic model with well-known biological relevance, the 5-dimensional Hindmarsh-Rose (HR) neuron model 
\cite{hindmarsh1984model,selverston2000reliable,lv2016model,yamakou2020chaotic}:
\begin{eqnarray}
\left\{\begin{array}{lcl}\label{eq:master}
\displaystyle{\frac{dx}{dt}} &=& ax^2 - bx^3 + y - z - k_{1}(\alpha+3\beta \phi^2)x + I,\\[2.0mm]
\displaystyle{\frac{dy}{dt}} &=&   c - dx^2 - y - \sigma w, \\[2.0mm]
\displaystyle{\frac{dz}{dt}} &=&   \theta[s(x - x_0) - z],\\[2.0mm]
\displaystyle{\frac{dw}{dt}} &=&   \mu[\gamma(y - y_0) - \rho w],\\[2.0mm]
\displaystyle{\frac{d\phi}{dt}} &=&  x - k_{2}\phi,
\end{array}\right.
\end{eqnarray}
where the dimensionless variables are given by the time $t\in\mathbb{R}$ and $[x,y,z,w,\phi]=[x(t),y(t),z(t),w(t),\phi(t)]\in\mathbb{R}^5$, representing a membrane potential $x$, recovery current $y$ associated with fast ions, slow adaption current $z$ associated with slow ions, slow adaption current $w$ associated with slower ions, and the magnetic flux $\phi$ across the neuron's cell membrane, respectively.

Throughout this study, some of the model's parameters are fixed at $a = 3.0$, $b = 1.0$, $\alpha = 0.1$, $\beta = 0.02$, $c = 1.0$, $d = 5.0$, $\sigma = 0.0278$, $\theta = 0.006$,  $x_0 = -1.56$, $y_0 = -1.619$, $\mu = 0.0009$, $\gamma = 3.0$, $\rho = 0.9573$, $I = 3.1$.
The remaining parameters, namely, $s\in[3.0,5.0]$ and the memristive gain parameters $k_1\in[0,5]$ and $k_2\in[0,2]$, will serve as the control/bifurcation parameters. Biologically, the parameter $k_1$  bridges the coupling and modulation on membrane potential $x$ from magnetic field $\phi$, and the parameters $k_2$ describe the degree of polarization and magnetization by adjusting the saturation of magnetic flux \cite{ma2017mode}. These parameters would be adjusted within the specified intervals to change qualitatively the dynamics of the model via bifurcations.

\subsection{Numerical bifurcation analysis and firing patterns}\label{sec:NB}
We now investigate the dynamics of the 5D HR neuron model described by Eq. \eqref{eq:master}. With initial conditions $[x(0), y(0), z(0), w(0), \phi(0)] = [0.1, 0.2, 0.3, 0.1, 0.2]$, we integrate these ordinary differential equations (ODEs). This is done using the fourth-order Runge-Kutta algorithm \cite{butcher1987numerical} with an integration time interval sufficiently long to overcome transient solutions.

Figures \ref{fig:bursting}\textbf{(a)}-\textbf{(e)} show the time series of the variables of the 5D HR neuron, highlighting five distinct firing patterns as the memristive gain parameters are varied. Throughout the rest of the paper, we fix the parameter $s$ at $s=4.75$, unless otherwise specified. Specifically, Figs. \ref{fig:bursting}\textbf{(a)}-\textbf{(e)} depict sub-threshold spiking ($k_1=2.3$, $k_2=0.5$), supra-threshold spiking ($k_1=0.1$, $k_2=0.1$), sub-threshold bursting ($k_1=5.0$, $k_2=1.5$), supra-threshold bursting ($k_1=0.08$, $k_2=0.4$), and no firing activity i.e., the neuron is in the quiescent state ($k_1=2.5$, $k_2=0.5$), respectively. The distinction between sub-threshold and supra-threshold firing patterns is based on whether the membrane potential variable $x$ oscillates below or above the (arbitrary) threshold value of $x_\text{th}=0.75$ (indicated in Fig.\ref{fig:bursting}\textbf{(a)}-\textbf{(e)} by the dashed green horizontal line), respectively. 
\begin{figure}[H]
    \centering
        \includegraphics[width=10.0cm,height=4.5cm]{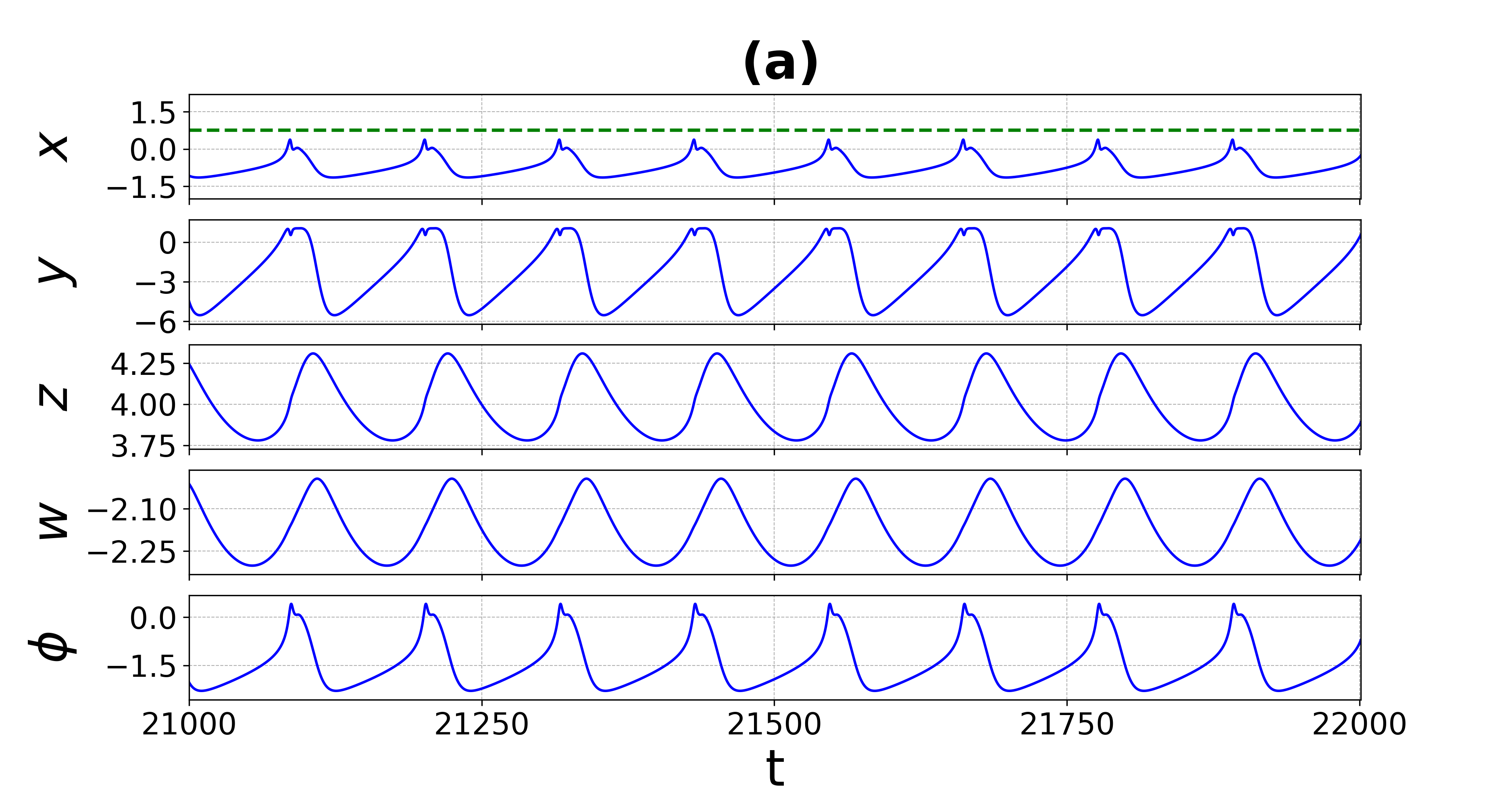}
        \includegraphics[width=10.0cm,height=4.5cm]{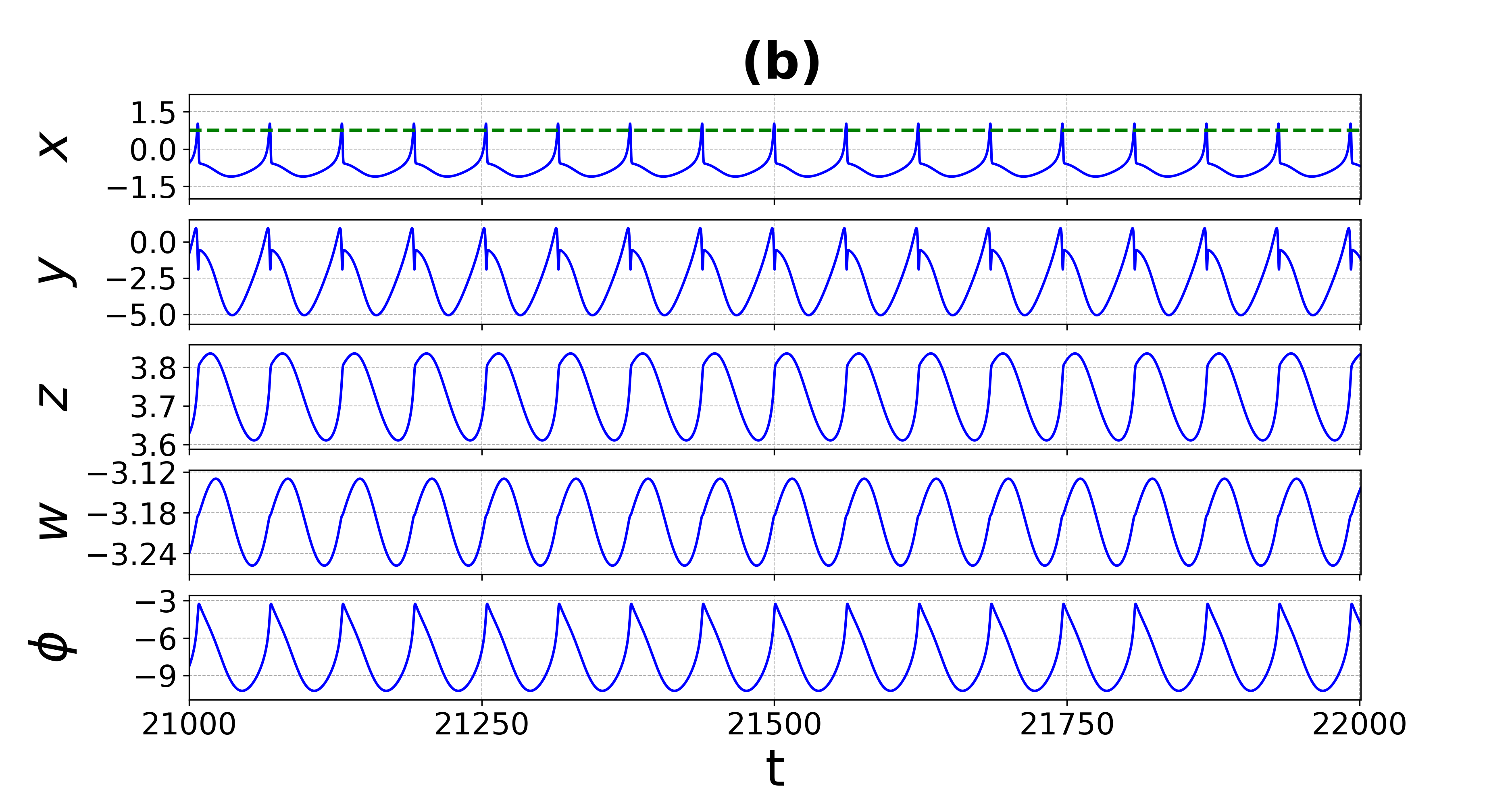}
         \includegraphics[width=10.0cm,height=4.5cm]{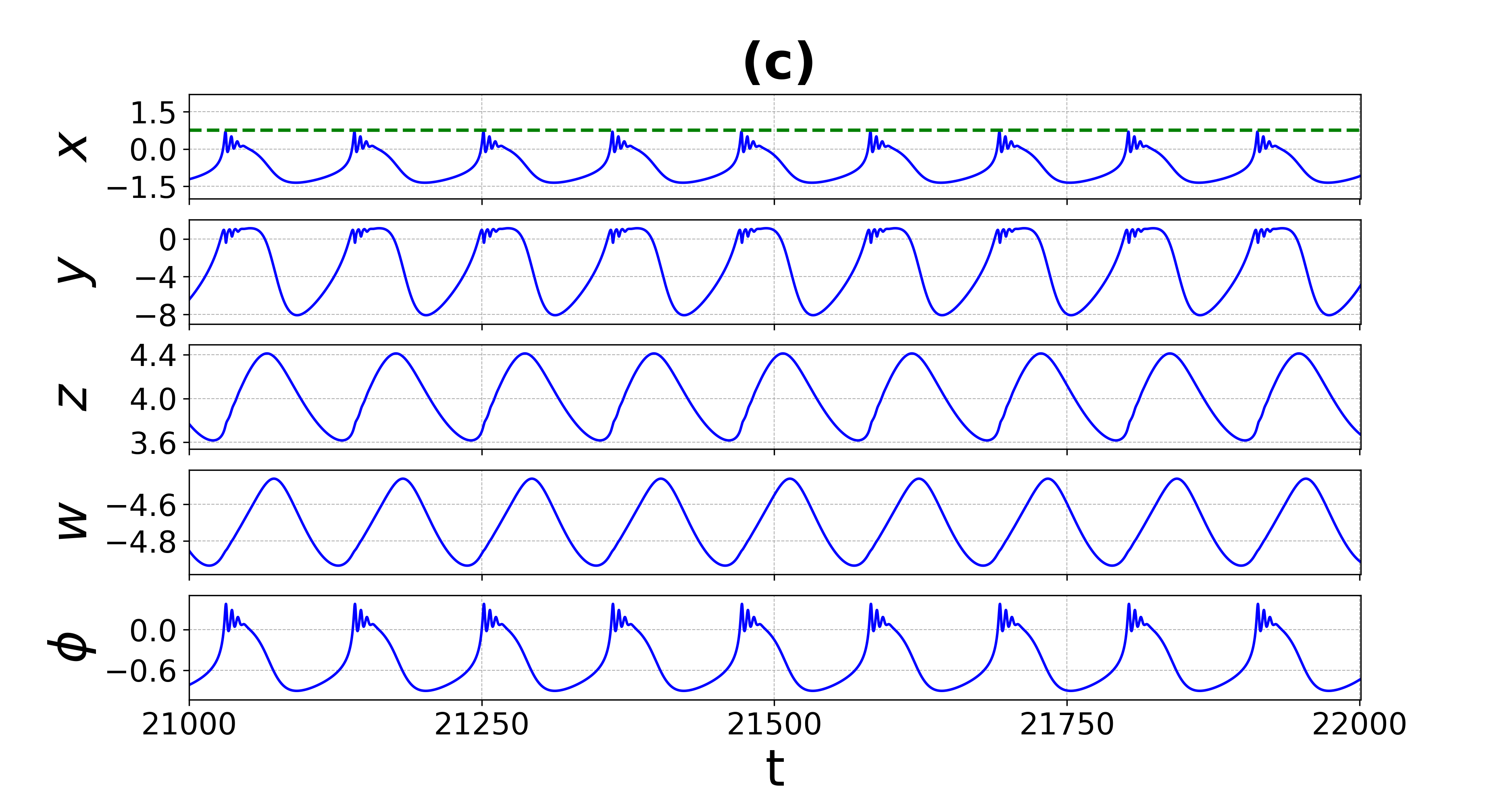}
        \includegraphics[width=10.0cm,height=4.5cm]{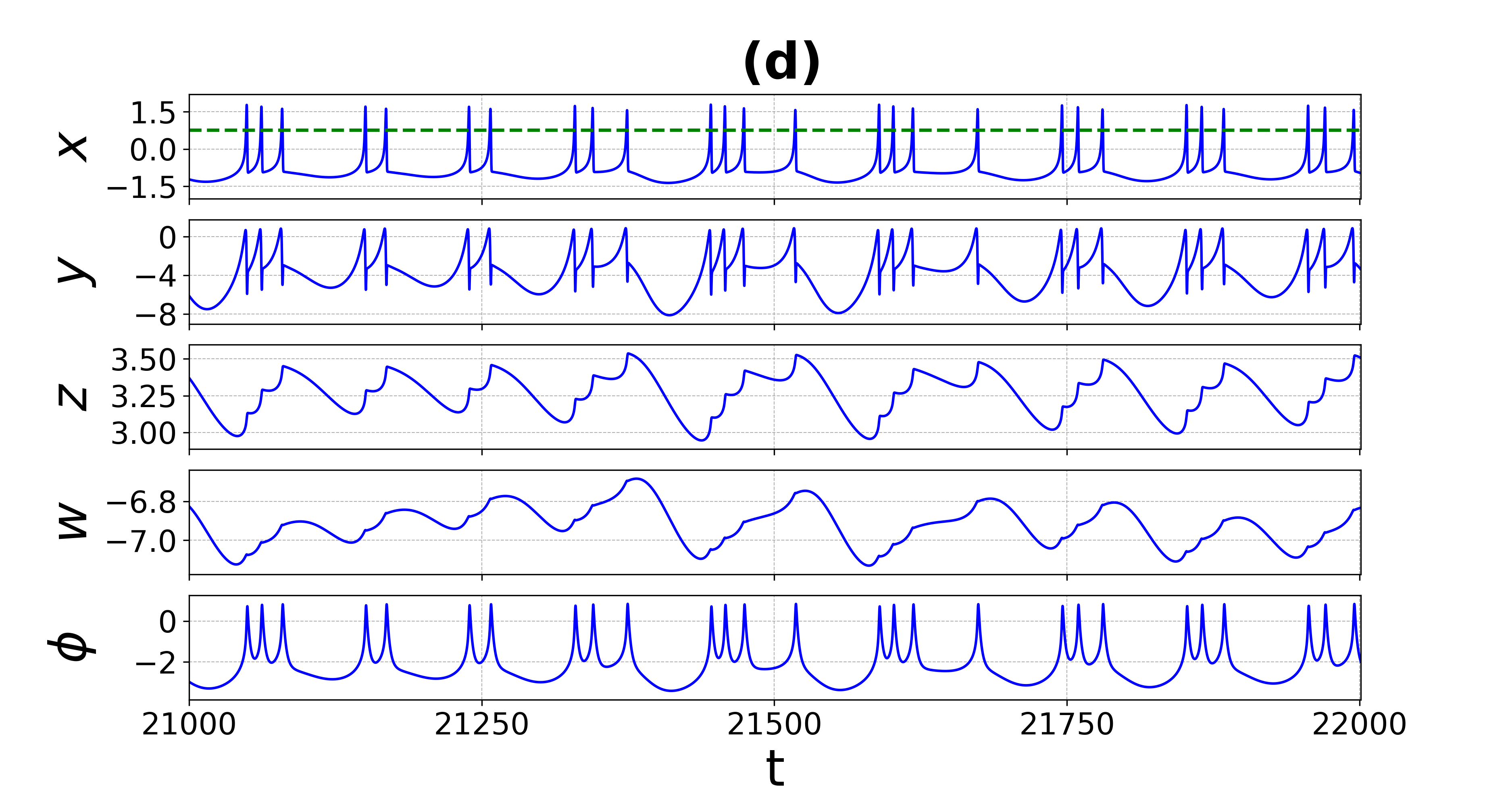}
        \includegraphics[width=10.0cm,height=4.5cm]{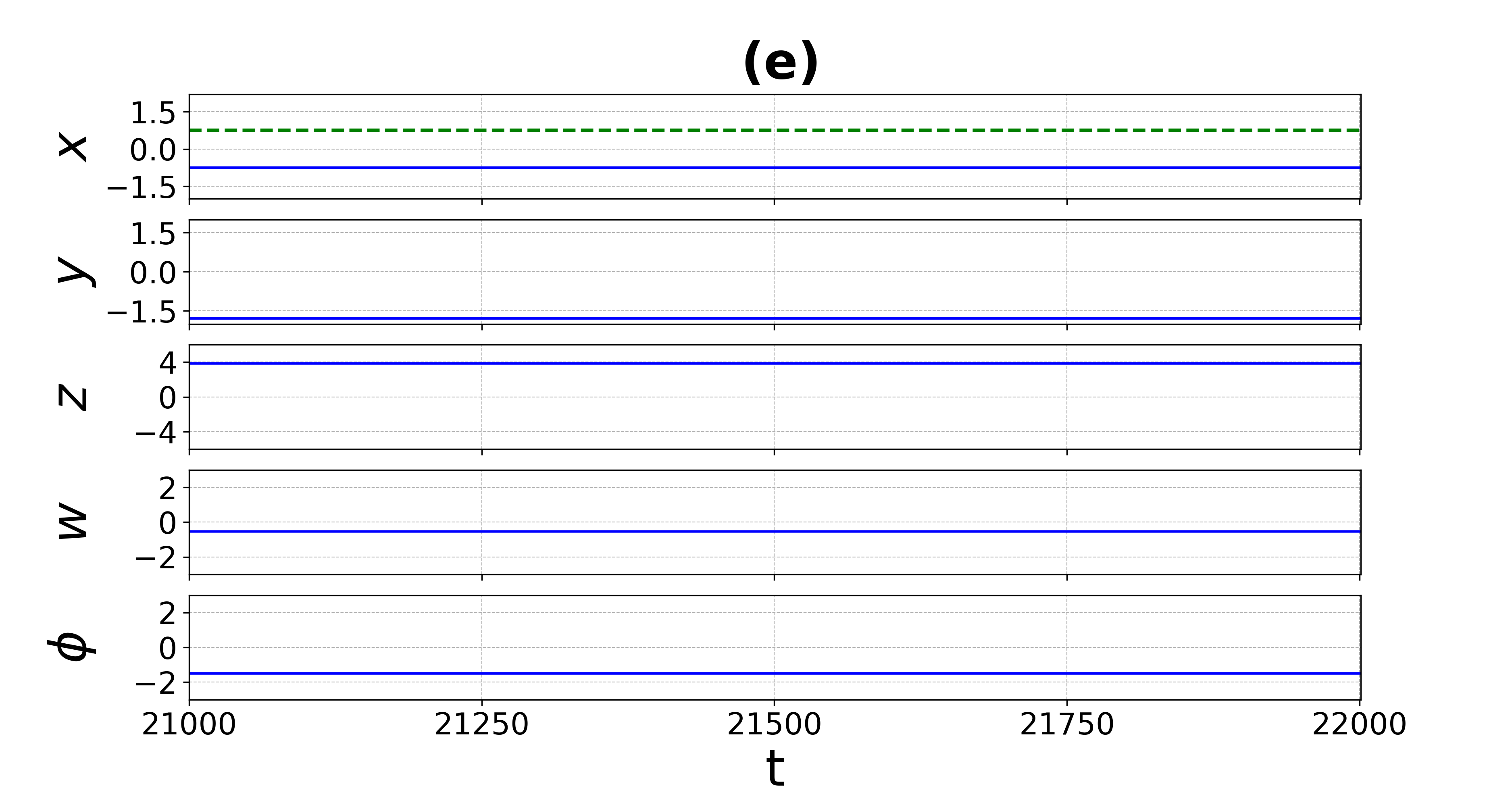}
    \caption{Activity of the 5D HR neuron model variables ($x,y,z,w,\phi$) for $s=4.75$, showing firing patterns including: \textbf{(a)} sub-threshold spiking, \textbf{(b)} supra-threshold spiking, \textbf{(c)} sub-threshold bursting, \textbf{(d)} supra-threshold bursting, and \textbf{(e)} no firing activity. The green horizontal dashed line at $x=x_\text{th}=0.75$ determine whether a firing pattern is sub-threshold ($x<x_\text{th}$) or supra-threshold ($x \geq x_\text{th}$ ). See the main text for each firing pattern's magnetic gain parameter values.}\label{fig:bursting}
\end{figure}

Bifurcation diagrams offer a comprehensive method for examining a system's dynamics. It enables us to simultaneously compare periodic and chaotic behaviors across a spectrum of parameter values. In addition to the above classification of firing patterns, the 5D HR neuron model can display regular (periodic) and irregular (chaotic) spiking and bursting activity.

To gain more insights into the dynamics and to further classify the firing patterns, several bifurcation diagrams have been computed, and some of these diagrams (shown in Fig. \ref{fig:master_lya_a}) have been chosen to illustrate the general structures of the system.
The bifurcation diagrams display the inter-spike intervals (ISIs), which represent the time between two consecutive peaks in the spike train of the membrane potential variable 
$x(t)$, plotted against a bifurcation parameter. These bifurcation diagrams reveal that the model exhibits a plethora of dynamical behaviors, including periodic, quasi-periodic, and periodic-doubling leading chaotic dynamics.

The accuracy of the bifurcation diagrams shown in Fig. \ref{fig:master_lya_a} is verified using the maximum Lyapunov exponent. Thus, for each value of the bifurcation parameters, we  computed the corresponding maximum Lyaponuv exponent (LE) defined as \cite{yamakou2020chaotic}:
\begin{align}
\Lambda_\text{max} = \lim_{t \to \infty} \frac{1}{t} \ln \big(|{L(t)}|\big),
\end{align}
where ${L(t)}:=(\delta x^2 + \delta y^2 + \delta z^2 + \delta w^2 + \delta \phi^2)^{1/2}$ is obtained by simultaneously (and numerically) solving system Eq. \eqref{eq:master} and its corresponding variational system given by:
\begin{eqnarray}\label{eq:master_var}
\left\{\begin{array}{lcl}
 \dot{\delta x} &=& (2ax - 3bx^2 - k_{1}\alpha -3 k_1\beta \phi^2)\delta x + \delta y - \delta z    \\
 &-& 6k_1 \beta x \phi \delta \phi,\\
 \dot{\delta y} &=&  -2dx\delta x - \delta y - \sigma \delta w,\\
 \dot{\delta z} &=&  s\theta \delta x - \theta \delta z, \\
 \dot{\delta w} &=&  \mu \gamma \delta y -\rho \mu \delta w, \\
 \dot{\delta\phi} &=&  \delta x - k_{2}\delta \phi, \\
\end{array}\right.
\end{eqnarray}
where the dots on the L.H.S (and in the rest of the paper) represent the differential operator $d/dt$.

The maximum LE $\Lambda_\text{max}$ provides us not only with qualitative insights into the system's behavior but also with a quantitative measure of its stability. A negative max LE ($\Lambda_\text{max}  < 0$) indicates dissipative systems, which exhibit asymptotic stability—the more negative the exponent, the greater the stability. In this context, super-stable fixed points and super-stable periodic orbits have a LE of $\Lambda_\text{max}  = -\infty$. When $\Lambda_\text{max}  = 0$, the system is marginally stable or quasi-periodic, indicating a conservative system. Finally, if $\Lambda_\text{max} > 0$, the orbit is unstable and chaotic, meaning that nearby trajectories will diverge, making the system's evolution highly sensitive to even infinitesimal changes in the initial conditions.

Alongside the bifurcation diagrams in Fig. \ref{fig:master_lya_a}, we present the corresponding variations of the three largest LEs out of five. Each bifurcation diagram is fully traced by the largest LE, $\Lambda_\text{max}$ (shown in blue).  For example, the bifurcation diagram and the corresponding LE spectrum in Fig. \ref{fig:master_lya_a}\textbf{(a)} illustrate periodic firing for $ s \in [3.0, 3.75) $, followed by a period-doubling bifurcation at $ s = 3.85 $ leading to chaos for $ s \in (3.75, 3.85] $, a narrow window of period-3 firing transitioning to chaotic firing via another period-doubling bifurcation, a second instance of period-3 firing leading to chaos in $ s \in [4.7, 4.85] $, and finally, period-2 firing in $ s \in [4.85, 5.0] $.

From Figs. \ref{fig:bursting}, \ref{fig:master_lya_a}\textbf{(c)} and \textbf{(d)}, we can see that changes in magnetic gain parameters $k_1$ and $k_2$ can induce different firing patterns, including periodic, quasi-periodic, and chaotic dynamics. We utilize two-parameter phase diagrams in Fig. \ref{fig:chaos_master} to gain further insight into this dynamical behavior. This colorful diagram is produced by numerically computing  $\Lambda_\text{max}$ on a grid of $200 \times 500$ parameter values.

In Fig. \ref{fig:chaos_master}\textbf{(a)}, the colors represent values of the bifurcation parameters $k_1$ and $k_2$ corresponding to different firing patterns, including supra-threshold bursting (red), sub-threshold bursting (green), supra-threshold spiking (orange), sub-threshold spiking (blue),  and no firing activity (black). In Fig. \ref{fig:chaos_master}\textbf{(b)}, the colors represent the magnitude of $\Lambda_\text{max}$: pink --- stable periodic firing; white --- marginally stable periodic and quasi-periodic firing; and green --- chaotic firing. 
Therefore, from Fig. \ref{fig:chaos_master}\textbf{(a)} and \textbf{(b)}, the values of the bifurcation parameters $k_1$ and $k_2$ that correspond to specific periodic, quasi-periodic, or chaotic firing patterns in the model can be identified.

\begin{figure}[t]
    \centering
        \includegraphics[width=5.0cm,height=7.0cm]{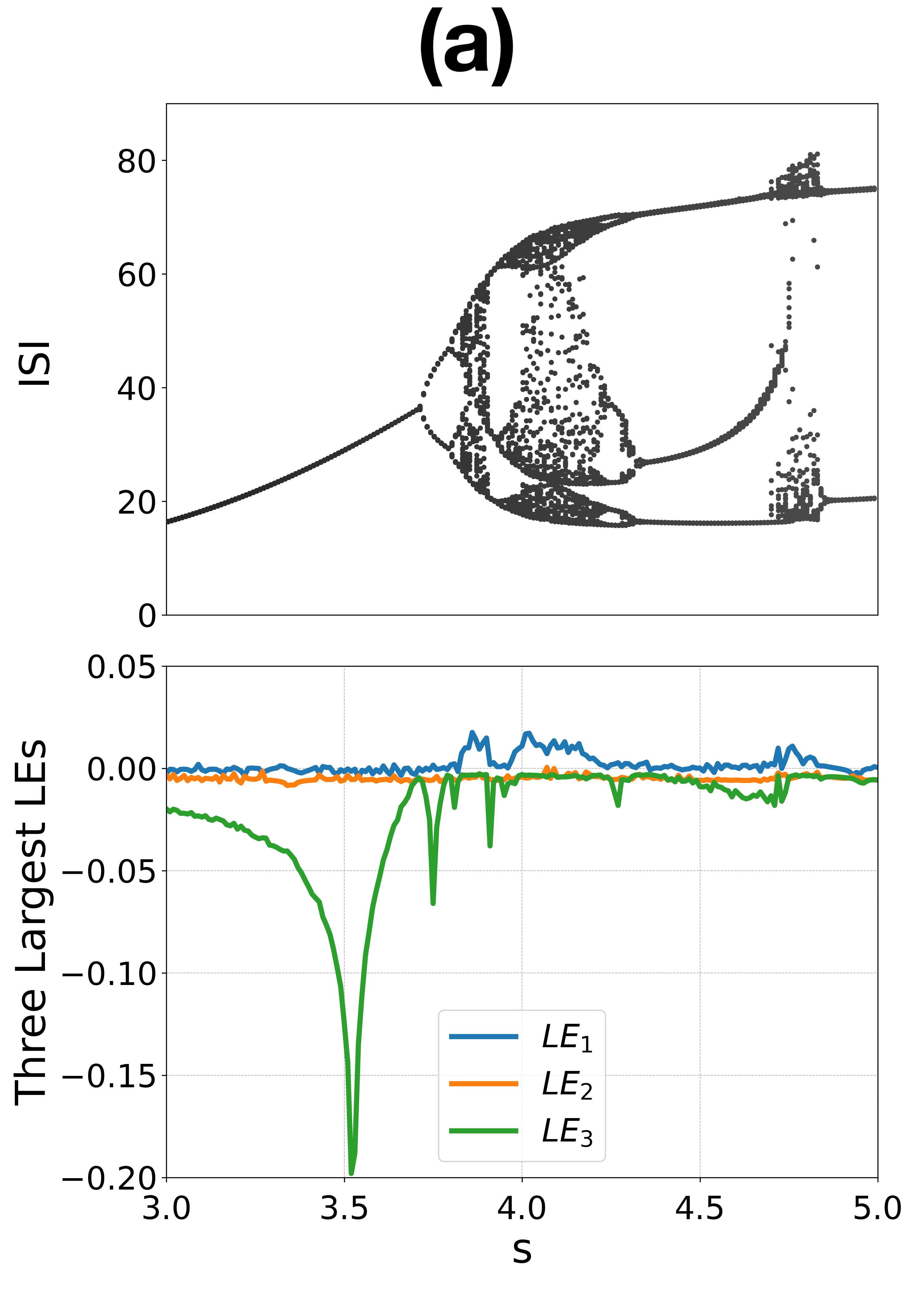}
        \includegraphics[width=5.0cm,height=7.0cm]{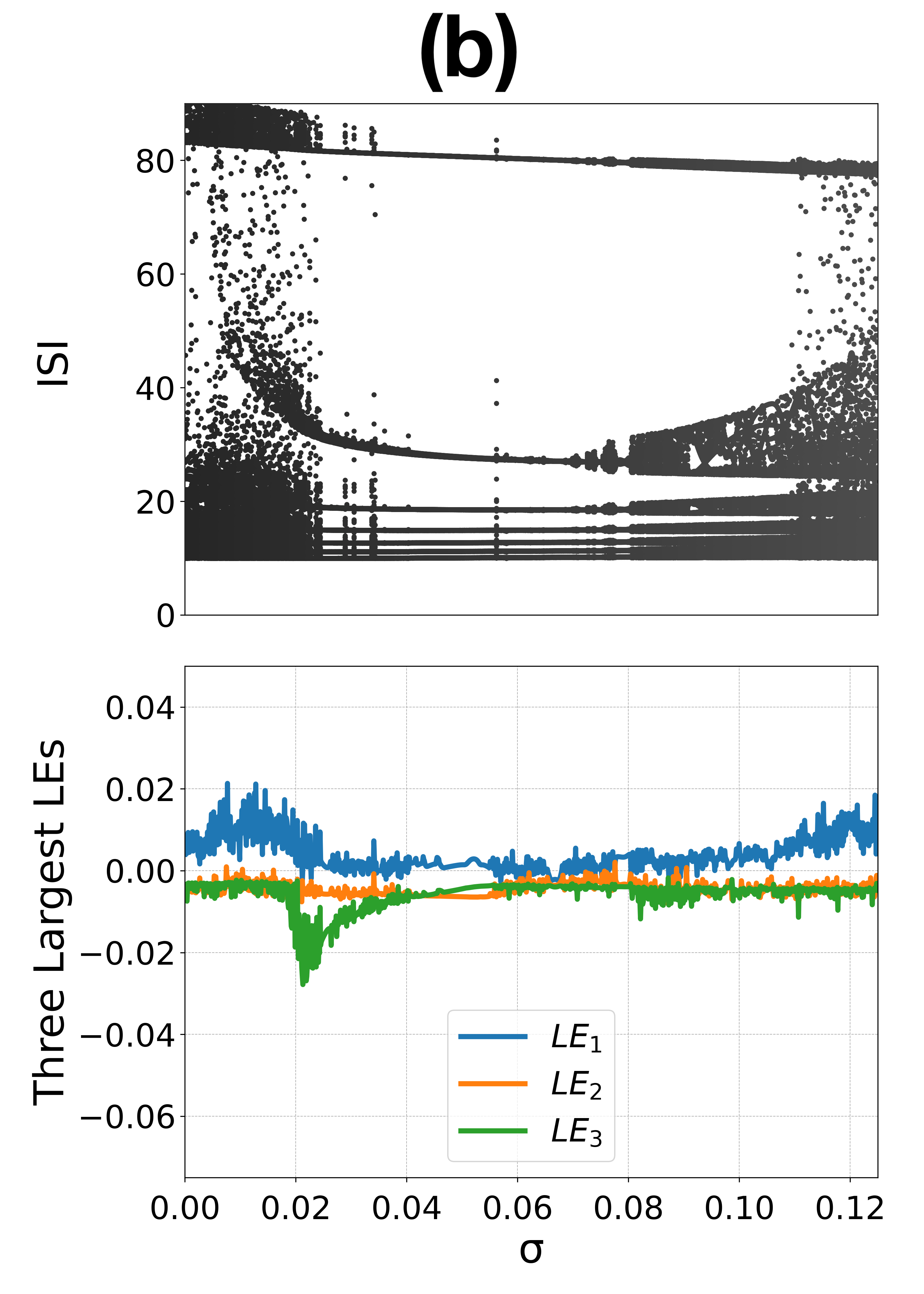}
        \includegraphics[width=5.0cm,height=7.0cm]{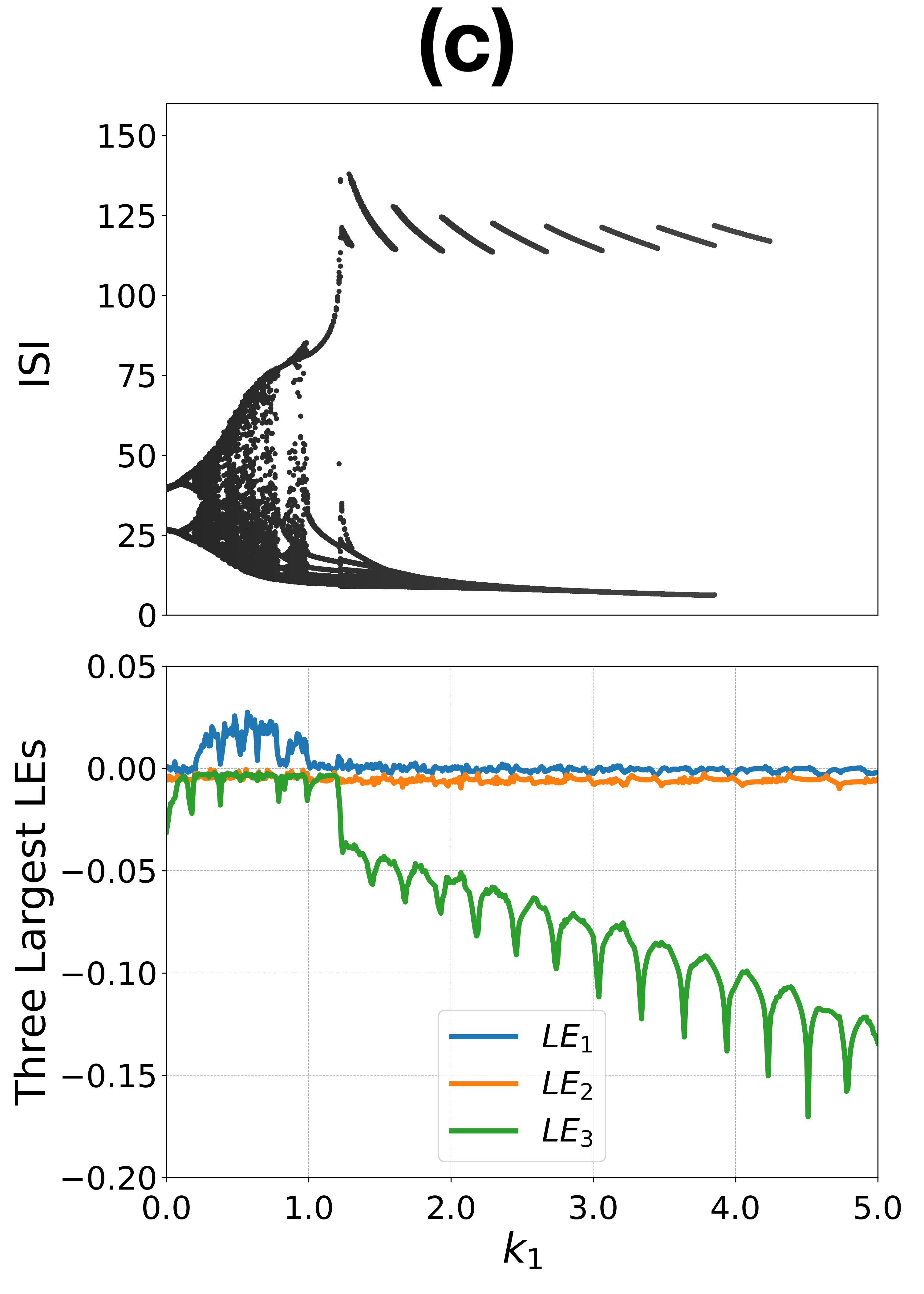}
         \includegraphics[width=5.0cm,height=7.0cm]{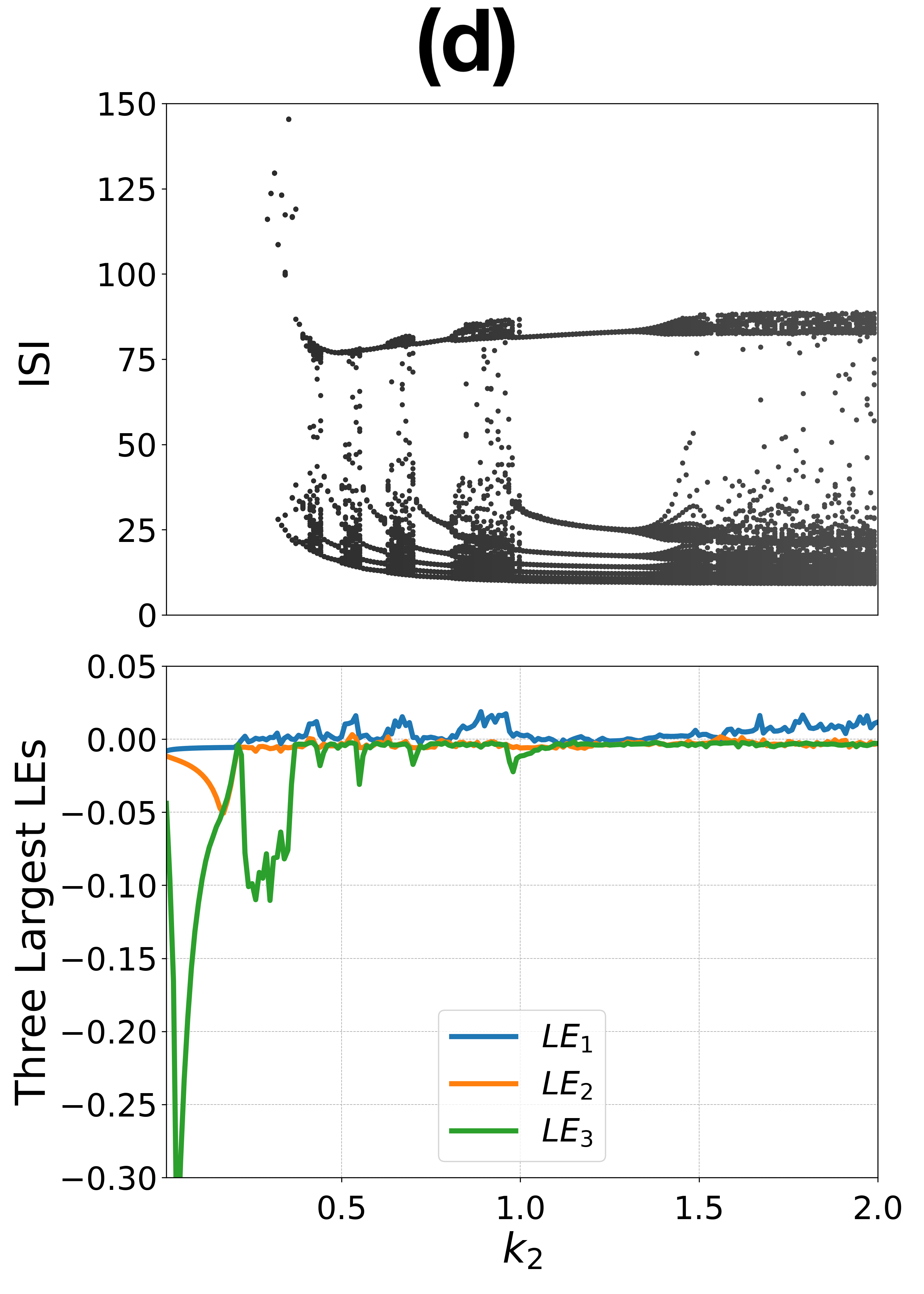}
    \caption{Bifurcation diagrams of the 5D HR neuron model and its three largest Lyapunov exponents as a function of \textbf{(a)} $s$ with $\sigma=0.0278$, $k_1 = 0.7$, $k_2=0.5$; \textbf{(b)} $\sigma$ with $s = 3.875$, $k_1=1.0$, $k_2=1.0$; \textbf{(c)} $k_1$ with $s = 3.875$, $\sigma=0.0278$, $k_2=1.0$; and \textbf{(d)} $k_2$ with $s$ = 3.875, $\sigma=0.0278$, $k_1=1.0$.}\label{fig:master_lya_a}
\end{figure}

\begin{figure}[t]
  \centering
  \includegraphics[width=10.0cm,height=10.0cm]{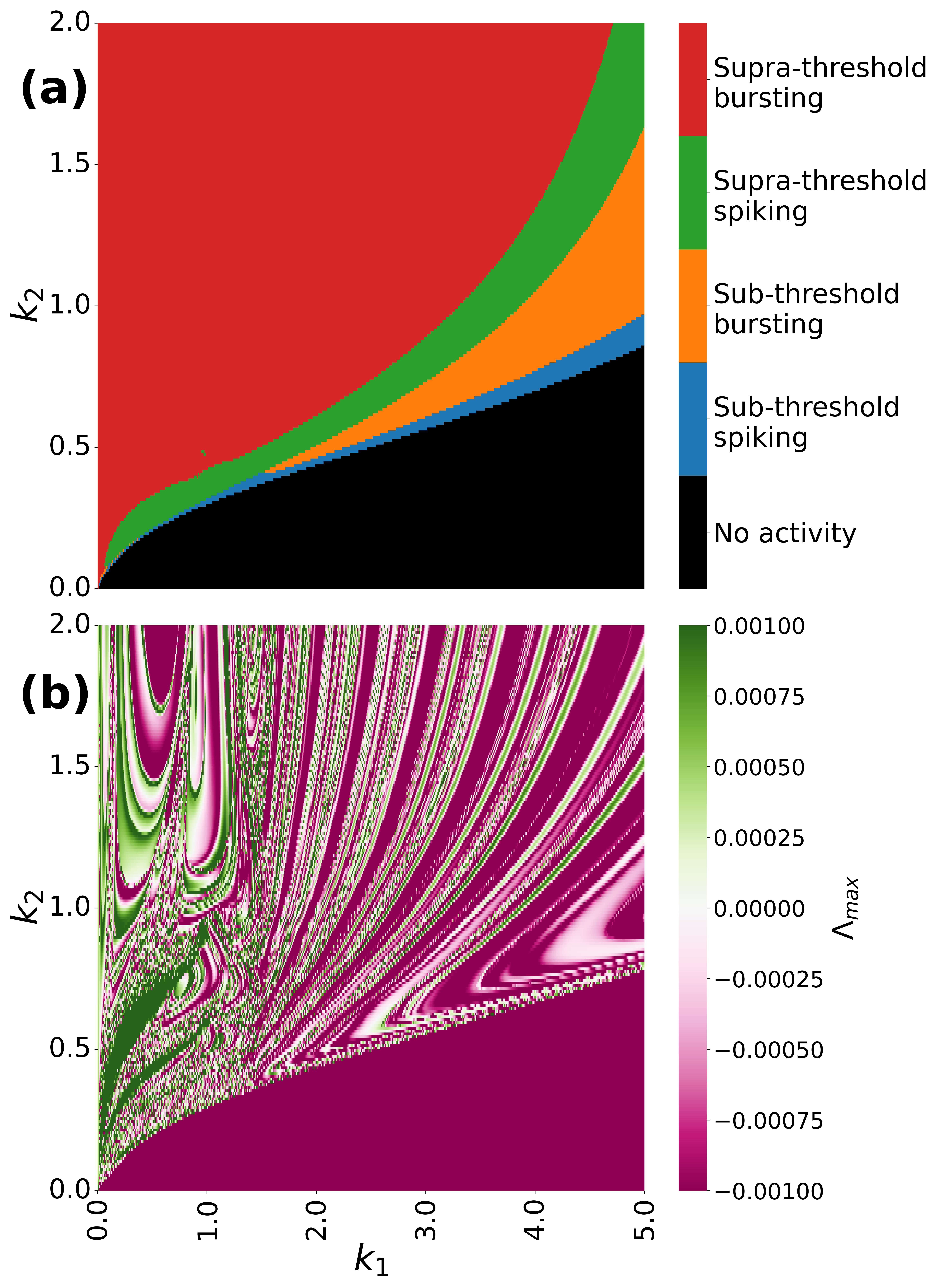}
  \caption{Firing pattern classification of the 5D HR neuron model as a function of the magnetic gain parameters $k_1$ and $k_2$ in panel \textbf{(a)} and the corresponding Lyapunov diagram for the largest Lyapunov exponent $\Lambda_\text{max}$ in panel \textbf{(b)}.}
  \label{fig:chaos_master}
\end{figure}

\section{Dynamical system approach to the reduced-order synchronization problem}\label{sec:DS}
In this section, we define and address the problem of reduced-order synchronization through the lens of dynamical systems theory.

\subsection{Definitions and statement of the problem}
\begin{definition}\label{def_1}:
Two chaotic systems $ x(t) \in \mathbb{R}^n$ and $ y(t) \in \mathbb{R}^n $ are said to have achieved \textit{same order complete synchronization} if there exists a set of initial conditions $ x(t_0) = x_0 $ and $ y(t_0) = y_0 $ such that the trajectories of the systems satisfy
$\lim\limits_{t \to \infty} \| x(t) - y(t) \| = 0$.
\end{definition}
The problem of same-order complete synchronization has been the main focus of previous research \cite{yamakou2020chaotic,arenas2008synchronization,tang2014synchronization,boccaletti2006complex,pecora2015synchronization}, and due to the systems having the same dimension and, hence, structurally comparable,  complete synchronization (as described in Definition \ref{def_1}) becomes relatively easier to achieve.

However, in real-world systems, interacting systems may not always have the same order; i.e., they may have different state space dimensions, making synchronization much harder to achieve than in dynamical systems of the same order. 
This necessitates the definition of a \textit{reduced-order complete synchronization problem}. 
\begin{definition} \label{def_2}: Two chaotic systems $x(t) \in \mathbb{R}^n$ and $y(t) \in \mathbb{R}^m, $ ($n>m$), are said to have achieved \textit{reduced-order complete synchronization} if there exists a set of initial conditions $ x(t_0) = x_0 $ and $ y(t_0) = y_0 $ such that the trajectories of the systems satisfy $\lim\limits_{t \to \infty} \| z(t) - y(t) \| = 0$, where  
$ z(t) \subset x(t)$  and $z(t) \in \mathbb{R}^m$.
\end{definition}

Furthermore, in real-world systems, achieving complete (ideal) synchronization---where the trajectories of interacting systems eventually become exactly equal (i.e., $\lim\limits_{t \to \infty} \| z(t) - y(t) \| = 0$)---can be difficult or even impossible. This difficulty can arise due to factors such as parameter mismatches/uncertainties, noise, different initial conditions and/or dimensions, differing governing equations of the interacting systems, and numerical integration errors during computation. In such scenarios, one often refers to achieving synchronization in a \textit{practical sense} (or $\varepsilon$-synchronization), where the objective is to attain a sufficient level of coordination tailored to a specific application, ensuring the effective operation of systems even if perfect alignment is not feasible. This concept is pivotal in disciplines such as control theory \cite{li2015practical}, secure communications \cite{sheikhan2013synchronization}, and network synchronization \cite{montenbruck2013practical}, where exact alignment may be impracticable or unnecessary, yet effective coordination remains paramount.

\begin{definition}\label{def_3}: 
Two chaotic systems $x(t) \in \mathbb{R}^n$ and $y(t) \in \mathbb{R}^m, $ ($n>m$), are said to have achieved \textit{reduced-order practical (or $\varepsilon-$) synchronization} if there exists a set of initial conditions $ x(t_0) = x_0 $ and $ y(t_0) = y_0 $ and $0 \leq \varepsilon \ll 1$ such that the trajectories of the systems satisfy $\lim\limits_{t \to \infty} \| z(t) - y(t) \| \leq \varepsilon$, where  $ z(t) \subset x(t)$  and $z(t) \in \mathbb{R}^m$.
\end{definition}

In this paper, we address the problem of \textit{reduced-order synchronization} and in particular, we are interested in the reduced-order synchronization problem between two chaotic and parameter-mismatched HR neuron models: the 5D HR neuron model defined by Eq. \eqref{eq:master} (referred to as the drive system) \cite{yamakou2020chaotic} and a 4D HR neuron model (referred to as the response system) given by \cite{lv2016model}:
\begin{eqnarray}
\left\{\begin{array}{lcl}\label{eq:response}
\dot{x_r} &=& \overline{a}x_r^2 - \overline{b}x_r^3 + y_r - z_r - k_{1}(\alpha+3\beta \phi_r^2)x_r\\
&+& I + U_x(t),\\
\dot{y_r} &=&   c - \overline{d}x_r^2 - y_r  + U_y(t), \\[2.0mm]
\dot{z_r}&=&   \overline{\theta}[s(x_r - x_0) - z_r] + U_z(t),\\[2.0mm]
\dot{\phi_r} &=&  x_r - k_{2}\phi_r + U_{\phi}(t),
\end{array}\right.
\end{eqnarray}
where the subscripts $ r $ distinguish variables of the response system from those of the drive system. Furthermore, $ \overline{a} $, $ \overline{b} $, $ \overline{d} $, and $\overline{\theta} $ represent the mismatched parameters of the response system that need to be estimated. The vector $ \mathbf{U}(t) = [U_x(t), U_y(t), U_z(t), U_{\phi}(t)]^T $ denotes coupling forces, which are active feedback control functions to be designed to achieve reduced-order synchronization of the chaotic systems described by Eqs. \eqref{eq:master} and \eqref{eq:response}. The control objective is to synchronize all states of the response system with the corresponding states of the drive system's canonical projection, thereby achieving reduced-order synchronization.

It is worth pointing out that when (i) the corresponding parameters of the drive and response systems are fixed at the same values, (ii) the additional parameters of the drive system are fixed to the values stated earlier (or see caption of Fig. \ref{fig:portraits} below), and (iii) the drive and response systems are set at same arbitrary initial conditions, i.e., $x(0)=x_r(0)=0.1$, $y(0)=y_r(0)=0.2$, $z(0)=z_r(0)=0.3$, $w(0)=0.1$, and $\phi(0)=\phi_r(0)=0.2$, the two systems exhibit different attractors. 
Figures \ref{fig:portraits}\textbf{(a)}-\textbf{(b)} show the phase portraits (attractors) of the 5D drive system (Eq. \eqref{eq:master}) and figures \ref{fig:portraits}\textbf{(c)}-\textbf{(d)} show the phase portraits of the 4D response system (Eq. \eqref{eq:response}). In the response system, $ \mathbf{U}(t) = [U_x(t), U_y(t), U_z(t), U_{\phi}(t)]^T  = (0, 0, 0, 0) $ for all $t \geq 0$, indicating the absence of controllers. As can be observed, the attractors of the drive and response systems differ significantly. Thus, there is no complete synchronization.

\begin{figure}[t]
    \centering
        \includegraphics[width=5.0cm,height=5.0cm]{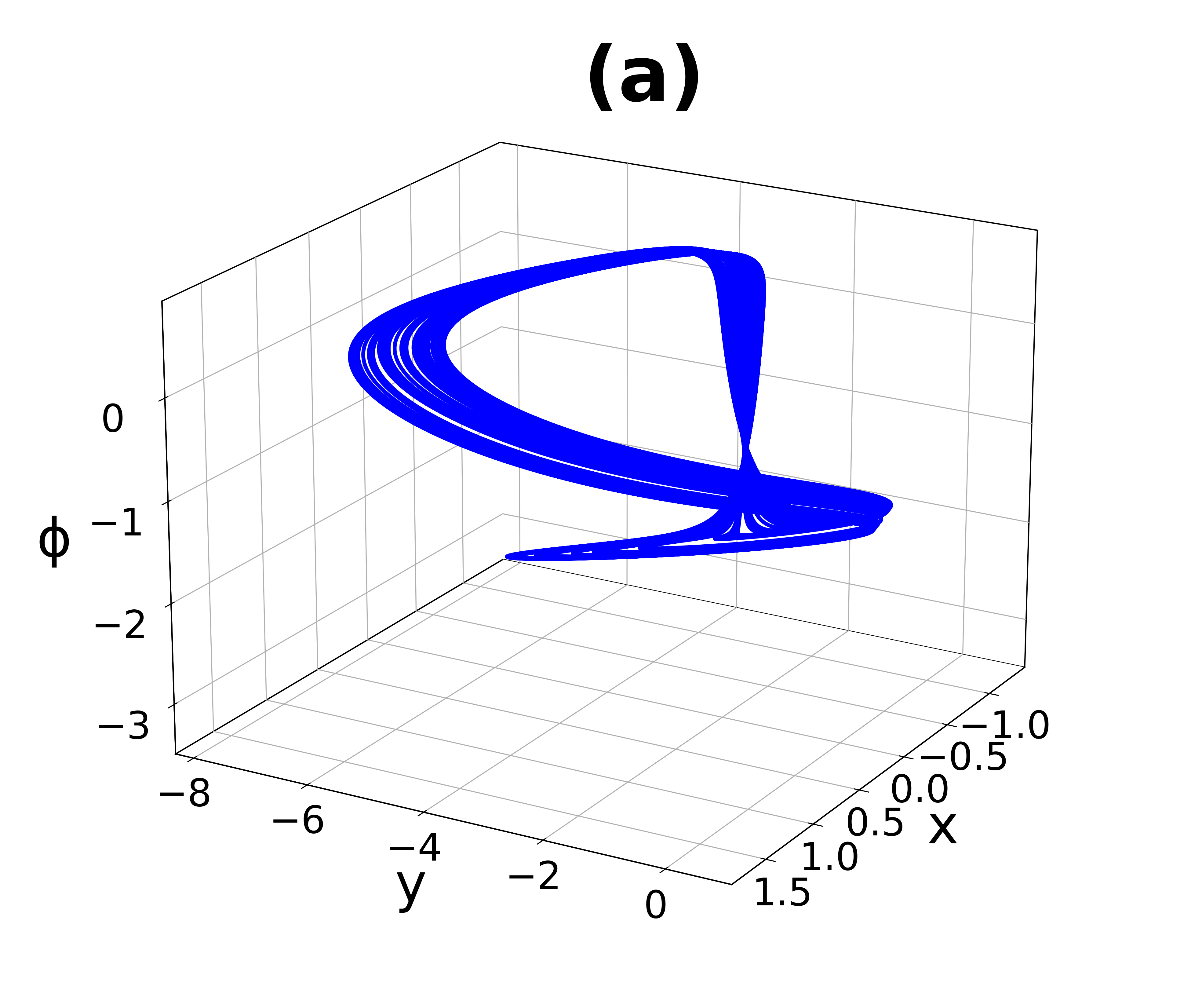}
        \includegraphics[width=5.0cm,height=5.0cm]{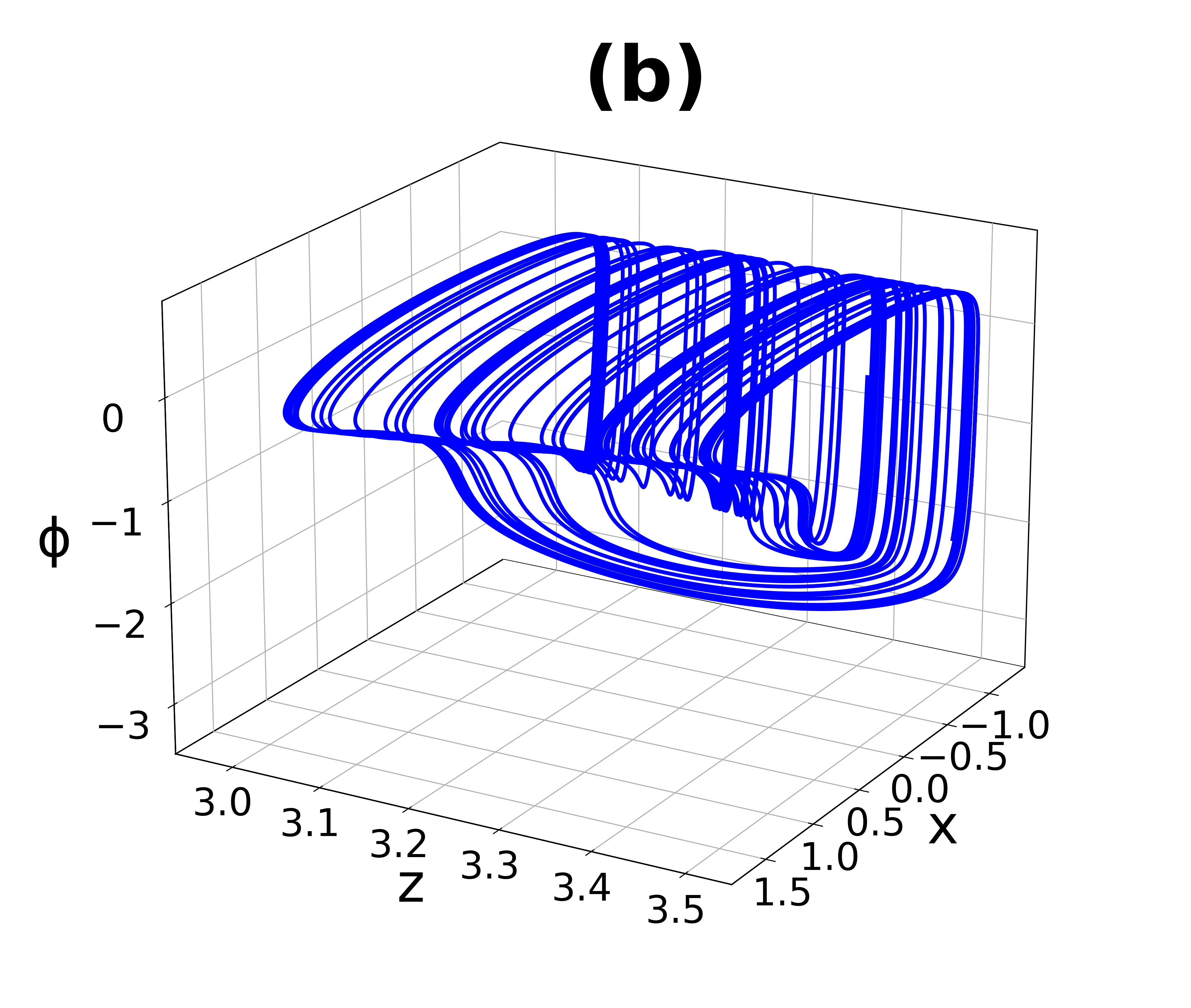}        \includegraphics[width=5.0cm,height=5.0cm]{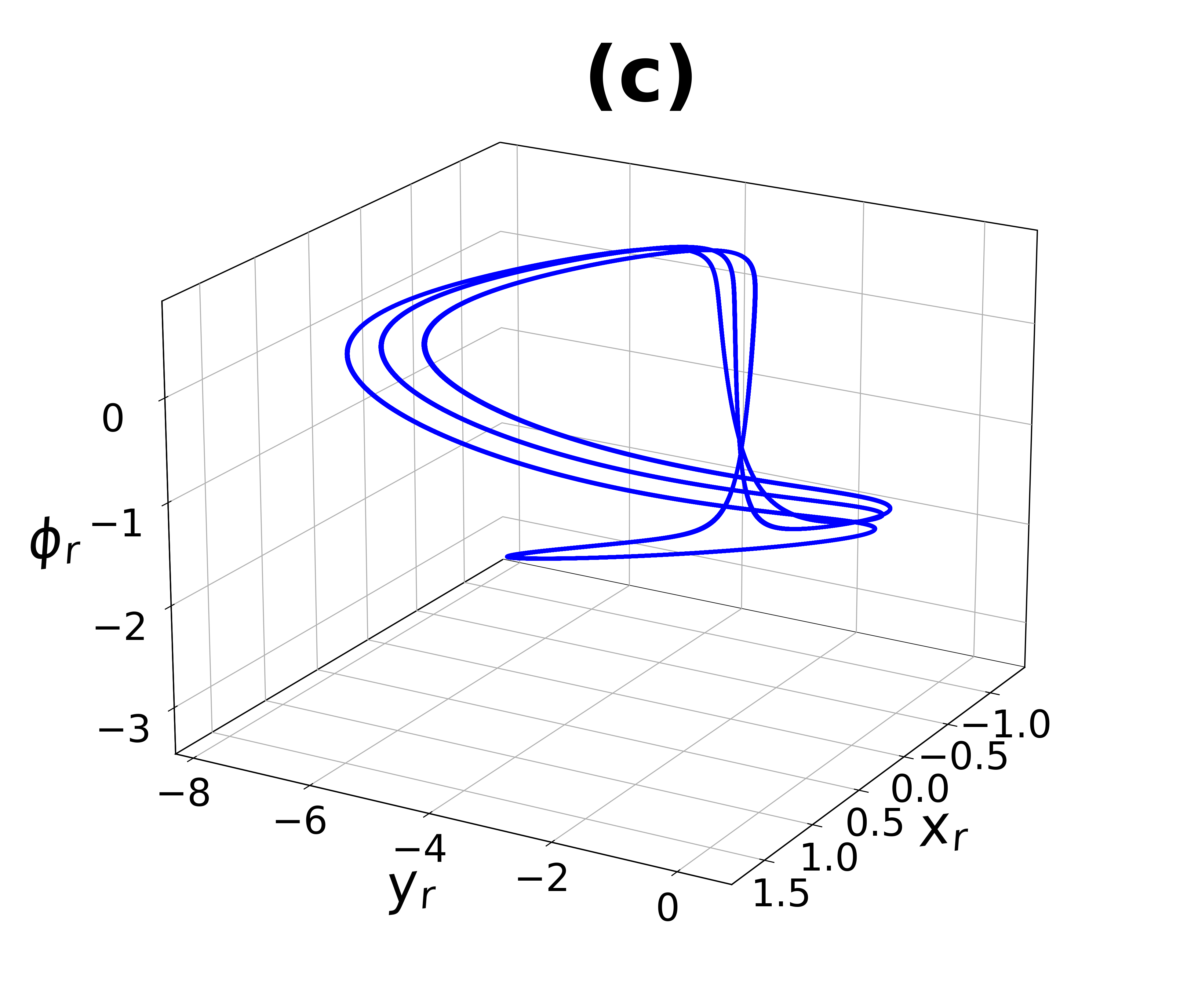}
        \includegraphics[width=5.0cm,height=5.0cm]{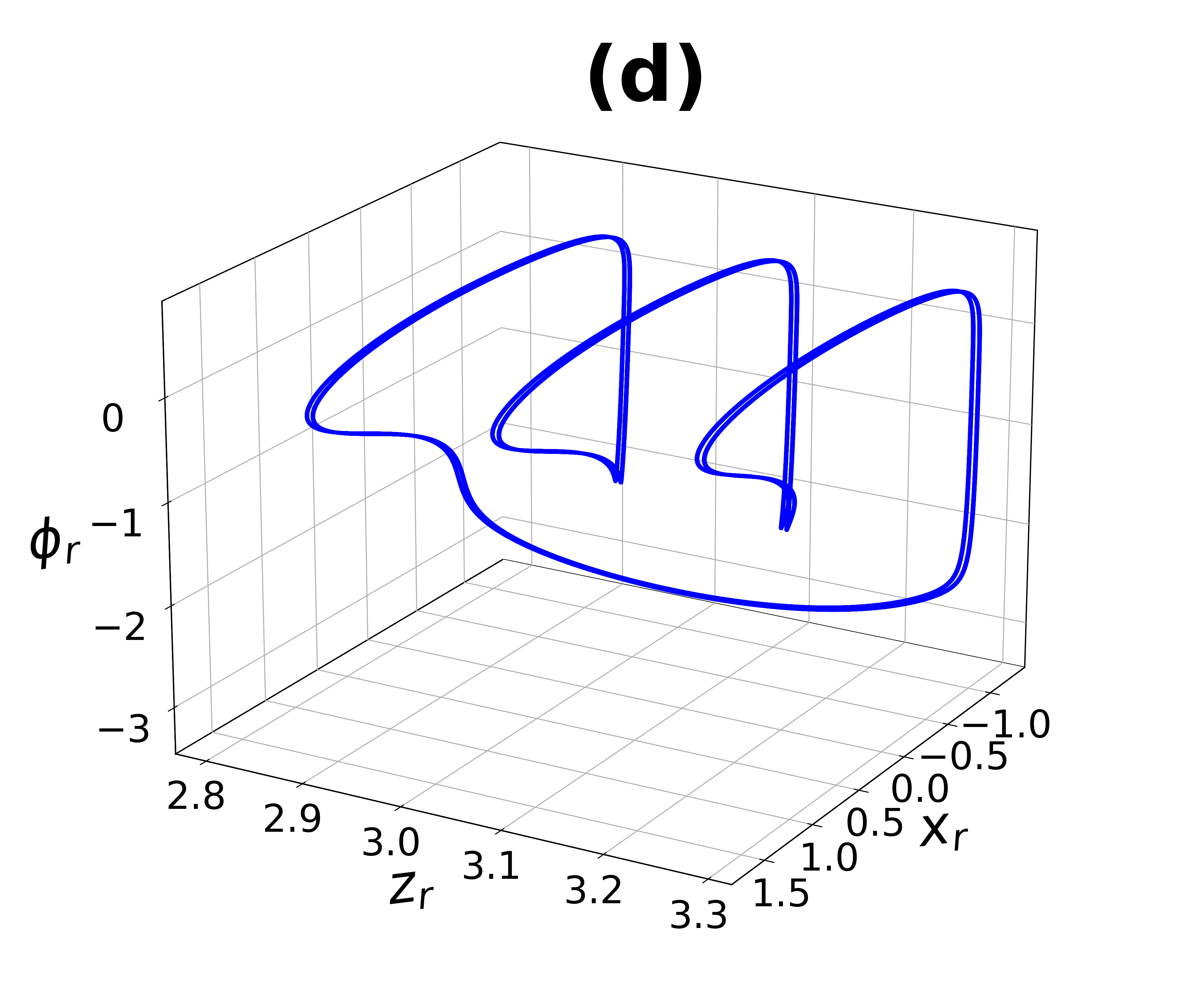}  
    \caption{Bounded chaotic attractors  of the drive system projected on the \textbf{(a)} $\phi yx$-space \textbf{(b)} and $\phi zx$-space. Also the response system projected on the \textbf{(c)}  $\phi_r y_rx_r$-space and \textbf{(d)} $\phi_r z_rx_r$-space. Parameters: $a = 3.0$, $b = 1.0$, $\alpha = 0.1$, $\beta = 0.02$, $c = 1.0$, $d = 5.0$, $\sigma = 0.0278$, $\theta = 0.006$,  $x_0 = -1.56$, $y_0 = -1.619$, $\mu = 0.0009$, $\gamma = 3.0$, $\rho = 0.9573$, $I = 3.1$, $s=4.75$, $k_1 = 0.08$ and $k_2=0.4$.}\label{fig:portraits}
\end{figure}

The reduced-order synchronization of the response system Eq. \eqref{eq:response} and the drive system  Eq. \eqref{eq:master} occurs when the corresponding variables of the two systems asymptotically exhibit identical behavior, i.e.,
\begin{equation} \label{errors_0}
\begin{split}
\begin{cases}
\lim\limits_{t \to \infty} \| x_r(t) - x(t) \| = 0,\\
\lim\limits_{t \to \infty} \| y_r(t) - y(t) \| = 0,\\
\lim\limits_{t \to \infty} \| z_r(t) - z(t) \| = 0,\\
\lim\limits_{t \to \infty} \| \phi_r(t) - \phi(t) \| = 0,
\end{cases}
\end{split}
\end{equation}
 for initial conditions chosen in some neighborhood
of the synchronization manifold, denoted by $\mathcal{M}_s$. Thus, by introducing coordinates transformation to $\mathcal{M}_s$, defined by the error vector $\mathbf{e}(t)=[e_x(t),e_y(t),e_z(t),e_{\phi}(t)]^T$, between the response and drive system as
\begin{equation} \label{errors}
\begin{split}
\begin{cases}
e_x(t) = x_r(t)  - x(t), \\
e_y(t) = y_r(t)  - y(t), \\
e_z(t) = z_r(t) - z(t), \\
e_{\phi}(t) = \phi_r(t)  - \phi(t), \\
\end{cases}
\end{split}
\end{equation}
we obtain the dynamics of the most significant transverse displacements from $\mathcal{M}_s$ as follows:
\begin{eqnarray}
\left\{\begin{array}{lcl}\label{eq:errordynamic_full}
\dot{e_x} &=& (2ax_r - 3bx_r^2 -k_{1}\alpha - 3k_1 \beta \phi_r^2)e_x + e_y - e_z \\
&-& 6k_1 \beta x_r \phi_re_{\phi} + [\overline{a}-a]x_r^2   - [\overline{b}-b]x_r^3 + U_x \\ 
\dot{e_y} &=& - 2dx_re_x - e_y -[\overline{d}-d]x_r^2 + \sigma w + U_y,\\
\dot{e_z} &=& \theta se_x - \theta e_z + [\overline{\theta}-\theta][s(x_r-x_0) - z_s] + U_z,\\
\dot{e_{\phi}} &=& e_x  - k_{2}e_{\phi} + U_{\phi}.\\
\end{array}\right.
\end{eqnarray}

Thus, the reduced-order synchronization problem between the drive and response systems given, respectively, by Eqs. \eqref{eq:master} and \eqref{eq:response}, reduces to finding an adaptive control command $\mathbf{U}=(U_x,U_y,U_z,U_{\phi})^T$ that ensures the error system in Eq. \eqref{eq:errordynamic_full} is asymptotically stable at the origin for any set initial condition $\mathbf{e}(0)=[e_x(0),e_y(0),e_z(0),e_{\phi}(0)]^T$. If the adaptive control command $\mathbf{U}=(U_x,U_y,U_z,U_{\phi})^T$ successfully stabilizes the error system in Eq. \eqref{eq:errordynamic_full} at the origin, that is at $\mathbf{e}(t)=[e_x(t),e_y(t),e_z(t),e_{\phi}(t)]^T = (0,0,0,0)^T$ for all $ t \ge t_0 \ge 0 $, where $ t_0 $  is the time when the control is activated, then reduced-order synchronization defined in Eq. \eqref{errors_0} would be achieved. 

\begin{theorem}
The 4D response system in Eq. \eqref{eq:response} and the 5D drive system of Eq. \eqref{eq:master} achieve reduced-order synchronization  following the adaptive controllers
\begin{equation}\label{t1:1}
U_i = -g_ie_i, \:\:\: \dot{g_i}(t)=e^2_i, \:\:\: i = x, y,z,\phi,
\end{equation}
and the update laws of the estimated parameters
\begin{equation}
\begin{split}
\begin{cases}\label{t1:2}
\dot{\overline{a}}=-x_r^2e_x, \\
\dot{\overline{b}}=x_r^3e_x, \\
\dot{\overline{d}}=x_r^2e_y,  \\
\dot{\overline{\theta}}=-[s(x_r - x_0)-z_r]e_z, \\
\end{cases}
\end{split}
\end{equation}
where $g_i$ is the feedback gain which will be estimated.
\end{theorem}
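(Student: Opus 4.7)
The plan is to prove Theorem~1 by Lyapunov's direct method with a composite Lyapunov function that simultaneously measures the synchronization errors, the parameter estimation errors $\tilde a=\bar a-a$, $\tilde b=\bar b-b$, $\tilde d=\bar d-d$, $\tilde\theta=\bar\theta-\theta$, and the offsets $g_i-G_i$ of the adaptive feedback gains from positive constants $G_i$ to be fixed. Specifically, I would take
\begin{equation*}
V(t)=\tfrac{1}{2}\!\!\sum_{i\in\{x,y,z,\phi\}}\!\! e_i^2+\tfrac{1}{2}\big(\tilde a^2+\tilde b^2+\tilde d^2+\tilde\theta^2\big)+\tfrac{1}{2}\!\!\sum_{i\in\{x,y,z,\phi\}}\!\!(g_i-G_i)^2,
\end{equation*}
which is positive definite and radially unbounded in the extended state of errors, parametric mismatches, and gain offsets.

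The core algebraic step is to show that the prescribed update laws make $\dot V$ manifestly dissipative. Differentiating $V$ along Eq.~\eqref{eq:errordynamic_full} and substituting the controllers $U_i=-g_ie_i$ with $\dot g_i=e_i^2$ from \eqref{t1:1} together with the parameter updates \eqref{t1:2}, the four parametric cross terms $\tilde a\,x_r^2 e_x$, $-\tilde b\,x_r^3 e_x$, $-\tilde d\,x_r^2 e_y$, and $\tilde\theta\,[s(x_r-x_0)-z_r]\,e_z$ arising in $e_i\dot e_i$ are exactly cancelled by $\tilde a\dot{\bar a}$, $\tilde b\dot{\bar b}$, $\tilde d\dot{\bar d}$, and $\tilde\theta\dot{\bar\theta}$, respectively. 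Likewise, $(g_i-G_i)\dot g_i=(g_i-G_i)e_i^2$ combines with the feedback contribution $-g_ie_i^2$ to leave $-G_ie_i^2$. What remains is a state-dependent quadratic form in $\mathbf e$ together with the single non-cancellable disturbance $\sigma w(t)\,e_y$ that the drive's fifth coordinate injects into $\dot e_y$:
\begin{equation*}
\dot V=\mathbf e^{\top}A\bigl(x_r(t),\phi_r(t)\bigr)\mathbf e-\!\!\sum_{i\in\{x,y,z,\phi\}}\!\! G_i e_i^2+\sigma w(t)\,e_y.
\end{equation*}

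The main obstacle is establishing a uniform bound on $\tfrac12(A+A^{\top})$ and absorbing the persistent disturbance term. Forward invariance of the bounded chaotic attractors documented in Section~\ref{sec:NB} supplies constants $M,W>0$ with $\|\tfrac12(A+A^{\top})\|\le M$ and $|w(t)|\le W$ for all $t$; choosing each target gain $G_i>M+\tfrac12$ and invoking Young's inequality $\sigma w\,e_y\le\tfrac12 e_y^2+\tfrac12\sigma^2 w^2$ yields $\dot V\le-\kappa\|\mathbf e\|^2+\tfrac12\sigma^2 W^2$ for some $\kappa>0$. From this bound one obtains boundedness of $\mathbf e$ and of all adaptive signals, and Barbalat's lemma applied to the uniformly continuous quantity $\|\mathbf e(t)\|^2$ delivers practical, i.e.\ $\varepsilon$-, reduced-order synchronization in the sense of Definition~\ref{def_3} with $\varepsilon$ of order $\sigma W/\sqrt{\kappa}$. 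The exact statement of Definition~\ref{def_2} is recovered only in the limit $\sigma\to 0$ (or, equivalently, if $U_y$ were augmented to feed forward $\sigma w(t)$); for the numerically relevant value $\sigma=0.0278$ used throughout the paper the residual is small enough to be indistinguishable from zero in the simulations. The subtle point I would check carefully is that the chosen $G_i$ can be realized by the monotone update $\dot g_i=e_i^2$ starting from any $g_i(0)\ge 0$, which follows because $(g_i-G_i)^2$ enters $V$ and is therefore kept bounded by $V(0)+\tfrac12\sigma^2 W^2 t/\kappa$ on any finite horizon while the error decays.
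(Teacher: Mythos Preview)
Your Lyapunov candidate and the cancellation mechanism are exactly those used in the paper: the same composite $V$, the same pairing of $\tilde a\dot{\bar a}$ etc.\ with the parametric cross terms, and the same collapse of $-g_ie_i^2+(g_i-G_i)e_i^2$ to $-G_ie_i^2$. Up to and including the expression $\dot V=\mathbf e^{\top}A\mathbf e-\sum_i G_ie_i^2+\sigma w(t)e_y$, your derivation coincides with the paper's.

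The substantive divergence is in how the residual $\sigma w(t)e_y$ is handled. You treat it as a generic bounded disturbance via Young's inequality and accept the resulting constant $+\tfrac12\sigma^2W^2$ in $\dot V$, which downgrades the conclusion to $\varepsilon$-synchronization. The paper instead exploits a structural feature of the drive system that you overlook: the trajectory $w(t)$ settles to strictly negative values after transients (cf.\ the time series in Fig.~\ref{fig:bursting}), so one may impose $w(t)\le -J<0$. The paper uses this sign information to argue that the $\sigma w e_y$ contribution can be absorbed into the negative-definite part, writing $\dot V\le -\mathbf e^{\top}\mathbf K\mathbf e-J|\sigma||e_y|$, and then proceeds via Sylvester's criterion on $\mathbf K$ and Barbalat's lemma to obtain \emph{exact} asymptotic convergence $e_i\to 0$ (Definition~\ref{def_2}), not merely practical synchronization.

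Your route, as written, also has two genuine gaps. First, from $\dot V\le -\kappa\|\mathbf e\|^2+\tfrac12\sigma^2W^2$ you cannot invoke Barbalat: that lemma needs $\int_0^\infty\|\mathbf e\|^2\,dt<\infty$, which this inequality does not provide. What you actually get is uniform ultimate boundedness of $\mathbf e$, and you should argue it that way. Second, and more seriously, with a persistent nonvanishing right-hand side the adaptive states need not remain bounded: since $\dot g_i=e_i^2\ge 0$ and $\|\mathbf e\|$ is only guaranteed to hover near $\sigma W/\sqrt\kappa$, the gains $g_i$ can drift to $+\infty$ (and so can $V$), which is consistent with your differential inequality but invalidates the claim that ``$(g_i-G_i)^2$ is kept bounded''. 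Your finite-horizon bound $V(0)+\tfrac12\sigma^2W^2t/\kappa$ grows linearly in $t$ and does not close this gap. The paper avoids this parameter-drift issue precisely because its treatment of the $w$-term yields $\dot V\le 0$, whence $V$ is nonincreasing and all adaptive signals are automatically in $L^\infty$.
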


\begin{proof}
Under the adaptive control laws in Eq. \eqref{t1:1} and the update laws of the mismatched parameters in Eq. \eqref{t1:2}, the error
dynamics in Eq. \eqref{eq:errordynamic_full} reads
\begin{eqnarray}
\left\{\begin{array}{lcl}\label{eq:errordynamic_update}
\dot{e_x} &=& (2ax_r - 3bx_r^2 -k_{1}\alpha - 3k_1 \beta \phi_r^2)e_x + e_y - e_z \\
&-& 6k_1 \beta x_r \phi_re_{\phi} + [\overline{a}-a]x_r^2   - [\overline{b}-b]x_r^3 -g_xe_x,\\ 
\dot{e_y} &=& - 2dx_re_x - e_y -[\overline{d}-d]x_r^2 + \sigma w -g_ye_y,\\
\dot{e_z} &=& \theta se_x - \theta e_z + [\overline{\theta}-\theta][s(x_r-x_0) - z_s] -g_ze_z,\\
\dot{e_{\phi}} &=& e_x  - k_{2}e_{\phi} -g_{\phi}e_{\phi},\\
\dot{g_x} &=& e_x^2,\\
\dot{g_y} &=& e_y^2,\\
\dot{g_z} &=& e_z^2,\\
\dot{g_{\phi}} &=& e_{\phi}^2,\\
\dot{\overline{a}}&=&-x_r^2e_x, \\
\dot{\overline{b}}&=&x_r^3e_x, \\
\dot{\overline{d}}&=&x_r^2e_y,  \\
\dot{\overline{\theta}}&=&-[s(x_r - x_0)-z_r]e_z. 
\end{array}\right.
\end{eqnarray}
We construct a continuous, positive definite Lyapunov function $V$ of the form
\begin{eqnarray}\label{lya_fun}
\begin{split}
V&= \dfrac{1}{2}\bigg[e_x^2 +e_y^2+e_z^2 + e_{\phi}^2 + (g_x - l_x)^2 + (g_y - l_y)^2\\
&+ (g_z - l_z)^2 + (g_{\phi} - l_{\phi})^2 + (\overline{a}-a)^2  \\
&+ (\overline{b}-b)^2 + (\overline{d}-d)^2 + (\overline{\theta}-\theta)^2\bigg]\geq0, \\
\end{split}\label{eq_lya_func}
\end{eqnarray}
where $l_i$ ($i=x,z,y,\phi$) are all positive constants. The time derivative of $V$ (denoted $\dot{V}$) along the trajectories of the error dynamical system in Eq. \eqref{eq:errordynamic_update} is given by
\begin{eqnarray}\label{eq:32}
\begin{split}
\dot{V} &= \big(2ax_r - 3bx_r^2 - k_1\alpha - 3k_1\beta \phi_r^2  - l_x\big)e_x^2 \\
& - \big(1+l_y\big)e_y^2 - \big(\theta + l_z\big)e_z^2 - \big(k_2 + l_{\phi}\big)e_{\phi}^2\\
& + \big(1-2dx_r\big)e_xe_y + \big(s\theta -1\big)e_xe_z \\
& + \big(1-6k_1\beta x_r \phi_s\big)e_xe_{\phi} + \sigma w e_y.
\end{split}
\end{eqnarray}
Since the drive system and the response system \cite{lv2016model} are chaotic, there exist positive real constants $J>0$ and $J_r>0$, such that the state trajectories of systems are globally bounded, i.e., $|i(t)| \leq J$ and $|i_r(t)| \leq J_r$, $i=x,y,z,\phi$, hold for all $t \geq 0$, respectively. 

It is worth noting that the $w$ variable of the drive system is always negative, i.e., $w(t)<0$ for all $t\geq0$. Even with a positive initial condition, i.e., $w(0)>0$, we have $w(t)<0$ for all $t>t_0$, where $t_0$ is the time-lapse of transient solutions (see the time series of the $w(t)$ variable in Fig. \ref{fig:bursting}). Therefore, one can impose the bound $w(t) \le - J <0$ (since $J>0$) for all $t>t_0$.
Applying these bounds term-wise in Eq. \eqref{eq:32}, we obtain
\begin{eqnarray}\label{eq:lya_dev}
\begin{split}
\dot{V} & \leq \big(2|a|J_r + 3|b|J_r^2 - |k_1\alpha| + 3|k_1\beta|J_r^2  - |l_x|\big)e_x^2 \\
& - |1 + l_y|e_y^2 - |\theta + l_z|e_z^2 - |k_2+l_{\phi}|e^2_{\phi}  \\
 &+ \big(1+2|d|J_r\big)|e_x||e_y|  + |s\theta -1||e_x||e_z| \\
 &+ \big(1+6|k_1\beta|J_r^2\big)|e_x||e_{\phi}| - J|\sigma||e_y|,
\end{split}
\end{eqnarray}
which is compactly written as 
\begin{eqnarray}\label{eq:lyapunov_matrix_equal}
\begin{split}
\dot{V}  \leq  -\mathbf{e}^T\mathbf{K}\mathbf{e}- J|\sigma||e_y|,
\end{split}
\end{eqnarray}
where $\mathbf{e}^T = (e_x, e_y, e_z, e_{\phi})$ and 
\begin{equation}
\mathbf{K}=\begin{bmatrix} 
A_{xx} & \:\:\:\:\: A_{xy} & \:\:\:\:\:A_{xz} & \:\:\:\:\: A_{x\phi}\\
A_{yx} & \:\:\:\:\:|1 + l_y|      & \:\:\:\:\:0       & \:\:\:\:\:0\\
A_{zx} &\:\:\:\:\:0       & \:\:\:\:\:|\theta + l_z| & \:\:\:\:\:0\\
A_{\phi x} & \:\:\:\:\:0       &\:\:\:\:\:0       & \:\:\:\:\:|k_2+l_{\phi}|\\
\end{bmatrix},
\end{equation}
is a symmetric matrix with 
\begin{eqnarray}
\left\{\begin{array}{lcl}
A_{xx} &=& -2|a|J_r - 3|b|J_r^2 + |k_1\alpha| - 3|k_1\beta|J_r^2  + |l_x|,\\[2pt]
A_{yx} &=& A_{xy} =  -\frac{1}{2}\big(1+2|d|J_r\big),\\[2pt]
A_{zx} &=& A_{xz} = -\frac{1}{2}|s\theta -1|,\\[2pt]
A_{\phi x} &=& A_{x\phi} = -\frac{1}{2}\big(1+6|k_1\beta|J_r^2\big).
\end{array}\right.
\end{eqnarray}

The origin of the error dynamical system in Eq. \eqref{eq:errordynamic_full} is stable in the Lyapunov sense, if $\dot{V}$ in Eq. \eqref{eq:lyapunov_matrix_equal} is negative semi-definite. 
Since $J|\sigma||e_y|$ is positive definite, it suffices to show that the matrix  ${\mathbf{K}}$ is also positive definite to have a negative semi-definite $\dot{V}$.

\begin{lemma}
(Sylvester's criterion). A real-symmetric matrix $\mathbf{A}$ is positive definite if and only if all the leading principal minors of $\mathbf{A}$ are positive. 
\end{lemma}
The leading principal minors of the matrix ${\mathbf{K}}$ are
\begin{eqnarray}
\left\{\begin{array}{lcl}
\mathbf{D_x} &=& -2|a|J_r - 3|b|J_r^2 + |k_1\alpha| - 3|k_1\beta|J_r^2  + |l_x|,\\[2.0mm]
\mathbf{D_y} &=& \mathbf{D_x}|1+l_y| - \frac{1}{4}\big(1+ 2|d|J_r\big)^2,\\[2.0mm]
\mathbf{D_z} &=& \mathbf{D_y}|\theta +l_z| - \frac{1}{4}|1+l_y||s\theta  -1|^2,\\[2.0mm]
\mathbf{D_{\phi}} &=& \mathbf{D_z}|k_2+l_{\phi}| - \frac{1}{4}|1+l_{y}||\theta +l_z|\big(1+ 6 |\beta k_2| J_r^2\big)^2,
\end{array}\right.
\end{eqnarray}
and $\mathbf{D_\text{i}} >0$, $i=x,y,z,\phi$, if and only if

\begin{eqnarray}\label{eq:39}
\left\{\begin{array}{lcl}
|l_x| > 2|a|J_r + 3|b|J_r^2 - |k_1\alpha| + 3|k_1\beta| J_r^2 =: H_1, \\[2.0mm]
|l_x| > H_1  + \dfrac{\big(1+ 2|d|J_r\big)^2}{4|1+ l_y|}   =: H_2, \\[4.0mm]
|l_x| > H_2 + \dfrac{|s\theta -1|^2}{4|\theta +l_z|}  =: H_3,\\[4.0mm]
|l_x| > H_3 + \dfrac{\big(1+6|\beta k_2| J_r^2\big)^2}{4|k_2+ l_{\phi}|}  =: H_4.\\
\end{array}\right.
\end{eqnarray}

From the conditions $l_i>0$, $i=x,y,z,\phi$ (given after Eq. \eqref{lya_fun}) and ordering $H_1<H_2\leq H_3<H_4$  obtained from Eq. \eqref{eq:39}, it is easy to see that
\begin{equation}\label{eq:40}
\mathbf{D_\text{i}}>0, \:\:\:i=x,y,z,\phi\:\:\:\Longleftrightarrow\:\:\:l_x > H_4.
\end{equation}
Thus, there always exist positive constants $l_i$, $i=y,z,\phi$ and suitable values of the parameters $a$, $b$, $c$, $d$, $\alpha$, $\beta$,
$r$, $s$, $\sigma$, $k_1$, and $k_2$ satisfying the condition in Eq.\eqref{eq:40} so that  $\dot{V} \le 0$, and hence $V$ is positive and decrescent. Consequently, the fixed point $e_i=0$, $i=x,z,y,\phi$ of the error dynamical system in Eq. \eqref{eq:errordynamic_update} is stable when $g_i=g_i^*$, $i=x,z,y,\phi$, and $\overline{a}=a$, $\overline{b}=b$, $\overline{d}=d$, $\overline{\theta}=\theta$.

However, the asymptotic stability of this fixed point is not yet guaranteed. To guarantee that transverse perturbations decay to the synchronization manifold without any transient growth, we observe from  Eqs. \eqref{lya_fun} and \eqref{eq:lyapunov_matrix_equal} that $e_i\in L^{\infty}$, and hence $g_i\in L^{\infty}$. Since the trajectories of systems in Eqs. \eqref{eq:master} and \eqref{eq:response} are bounded, it follows that $(\overline{a}-a)$, $(\overline{b}-b)$, $(\overline{d}-d)$, $(\overline{\theta}-\theta)\in L^{\infty}$, where $a$, $b$, $d$, and $\theta$ are the estimated values of the unknown $\overline{a}$, $\overline{b}$, $\overline{d}$, and $\overline{\theta}$, respectively. It also follows that the controllers $U_i\in L^{\infty}$, $i=x,y,z,\phi$. Thus, from the error dynamical system in Eq. \eqref{eq:errordynamic_update}, we have that $\dot{e_i}\in L^{\infty}$, $i=x,y,z,\phi$.

Furthermore, we split $\dot{V}$ in 
Eq. \eqref{eq:lyapunov_matrix_equal} into two parts: $\dot{V} \leq \dot{V}_1 + \dot{V}_2$, where $\dot{V}_1= -\mathbf{e}^T\mathbf{K}\mathbf{e}$ and $\dot{V}_2  = - J|\sigma||e_y|$. Using $\dot{V}_1$, we establish the bound:
\begin{eqnarray}\label{eq:41}
\begin{split}
\begin{aligned}
\int_0^t \mathbf{e}(s)^T \mathbf{K}\mathbf{e}(s) ds &= -\int_0^t \dot{V}_1(s) ds\\&=V_1(0) - V_1(t) \le V_1(0).
\end{aligned}
\end{split}
\end{eqnarray}
\begin{lemma}\label{Rayleigh quotient} (Rayleigh quotient). Let $\mathbf{A}$ be a symmetric square matrix and $\mathbf{x}$ be a non-zero vector, then $\lambda_{min}(\mathbf{A}) \|\mathbf{x}\|^2 \le \mathbf{x}^T\mathbf{A}\mathbf{x}$, where $\lambda_{min}$ is the minimum eigenvalue of $\mathbf{A}$.
\end{lemma}

Combining Eq. \eqref{eq:41} and Lemma \ref{Rayleigh quotient}, we obtain: 
\begin{equation}\label{eq:42}
\begin{split}
\begin{aligned}
\int_0^t \|\textbf{e}(s)\| ds \le V_1(0)/\lambda_{min}(\mathbf{K}) 
\end{aligned}
\end{split}
\end{equation}
where $\lambda_{min}(\mathbf{K})$ is the minimum eigenvalue of the positive definite matrix $\mathbf{K}$. 
\begin{lemma}\label{Barbalat's Lemma}(Barbalat's Lemma). Let $f(t)$  be a uniformly continuous function with values in an interval $[0,\infty)$ and $\lim\limits_{t\to\infty} \int_0^t f(s) ds $ be finite, then $\lim\limits_{t\to\infty} f(t) \rightarrow 0$. 
\end{lemma}
Combining Eqs. \eqref{eq:41} and \eqref{eq:42} and Lemma \ref{Barbalat's Lemma}, we conclude that for any given initial conditions,
\begin{eqnarray}\label{eq:43}
\left\{\begin{array}{lcl}
\lim\limits_{t\rightarrow\infty}\|e_i(t)\| = 0, i=x,y,z,\phi,\\[2.0mm]
\lim\limits_{t\rightarrow\infty} g_i(t) \rightarrow g_i^*, i=x,y,z,\phi,\\[2.0mm]
\lim\limits_{t\rightarrow\infty} \overline{a} \rightarrow  a, \\[2.0mm]
\lim\limits_{t\rightarrow\infty} \overline{b} \rightarrow  b,  \\[2.0mm]
\lim\limits_{t\rightarrow\infty} \overline{d} \rightarrow  d,  \\[2.0mm]
\lim\limits_{t\rightarrow\infty} \overline{\theta} \rightarrow  \theta.  
\end{array}\right.
\end{eqnarray}
Hence, the 4D response system in Eq. \eqref{eq:response} can synchronize in reduced-order the 5D drive system in Eq. \eqref{eq:master} globally and asymptotically under the adaptive control laws in Eq. \eqref{t1:1} and the update laws in Eq. \eqref{t1:2}. This completes the proof. \qed

\end{proof}

\begin{remark}
The convergence of the feedback gains $g_i$ to four
constants $g_i^*$, $i=x,y,z,{\phi}$, respectively, depends only on the
initial condition $g_i(0)$, $i=x,y,z,\phi$. 
\end{remark}
\begin{remark}
The convergence of the three estimated parameters $\overline{a}$, $\overline{b}$, and $\overline{d}$ to the three constants $a$, $b$, and $d$, respectively, also depend only on the initial condition $\overline{a}(0)$, $\overline{b}(0)$, and $\overline{d}(0)$. However, convergence of estimated parameter $\overline{\theta}$ to the constant $\theta$ depends on not only the initial condition $\overline{\theta}(0)$, but also on the values of the parameters $s$ and $x_0$
.\end{remark}

\subsection{Numerical simulations}
Numerical examples are presented in this subsection to verify the effectiveness of the proposed adaptive reduced-order synchronization scheme. In the simulations,  the initial conditions
of the drive system (in Eq. \eqref{eq:master}) and response system (in Eq. \eqref{eq:response}) are set to be  $[x(0),y(0),z(0),\\w(0),\phi(0)]=[1.0,0.5,1.3,-0.5,-1.2]$ and $[x_r(0),y_r(0),\\z_r(0),\phi_r(0)]=[1.1,-2.2,-0.6,0.5]$, respectively.
We set the drive system in a chaotic supra-threshold bursting regime by fixing the values of its parameters at $a = 3.0$, $b = 1.0$, $\alpha = 0.1$, $\beta = 0.02$, $c = 1.0$, $d = 5.0$, $\sigma = 0.0278$, $\theta = 0.006$, $s=4.75$  $x_0 = -1.56$, $y_0 = -1.619$, $\mu = 0.0009$, $\gamma = 3.0$, $\rho = 0.9573$, $I = 3.1$,  $k_1=0.85$, and $k_2=0.5$. The response system is also set into a supra-threshold (without controllers $\mathbf{U} = (0,0,0,0))$ bursting regime by fixing the values of its known parameters at $\alpha = 0.1$, $\beta = 0.02$, $c = 1.0$, $s=4.75$ , $x_0 = -1.56$,  $I = 3.1$,  $k_1=0.85$, and $k_2=0.5$. 
The initial value of estimate for the ``unknown" parameters are $[\overline{a}(0),\overline{b}(0),\overline{d}(0),\overline{\theta}(0)]=[0,0,0,0]$. Furthermore, the initial values of the feedback gain parameters of the controllers are set at $[g_x(0),g_y(0),g_z(0),g_{\phi}(0)]=[0.5,0.5,0.5,0.5]$. 
The error variables are initialized  at $[e_x(0),e_y(0),e_z(0),e_{\phi}(0)]=[2.0,-2.0,2.0,-2.0]$.

The fourth-order Runge-Kutta algorithm is employ\-ed to simultaneously integrate the drive, response, and controlled error systems described by Eqs. \eqref{eq:master}, \eqref{eq:response}, and \eqref{eq:errordynamic_update}, respectively. With controllers $\mathbf{U}=(U_x,U_y,U_z,U_{\phi})^T$ chosen as in Eqs. \eqref{t1:1} and \eqref{t1:2}. Figures \ref{fig:ds_sim} \textbf{(a)}-\textbf{(c)} display the time evolution of the synchronization errors, the controllers' feedback gain parameters of the controllers, and the ``unknown" parameters' estimation. These numerical simulations show that the synchronization errors asymptotically converge to zero ($e_x\to0$, $e_y\to0$, $e_z\to0$, $e_{\phi}\to0$), the feedback gain parameters of the controllers converge to constant values ($g_x\to 2.72$, $g_y\to 10.22$, $g_z\to 2.05$, $g_{\phi}\to 2.00$), and the estimations of the ``unknown" parameters converge also to some constants ($\overline{a} \rightarrow  a=3.0$, $\overline{b} \rightarrow  b=1.0$, $\overline{d} \rightarrow  d=5.0$, $\overline{\theta} \rightarrow  \theta=0.006$), i.e., the two mismatched systems successfully achieve reduced-order adaptive synchronization. To clearly show the variation in errors, feedback gain, and estimated parameters from the initial conditions to the final states, we plotted the time axes in Figs. \ref{fig:ds_sim} \textbf{(a)}-\textbf{(c)} on a logarithmic scale. This adjustment was necessary due to the extended duration of the simulations.

\begin{figure}[t]
\centering
    \includegraphics[width=10.0cm,height=5.0cm]{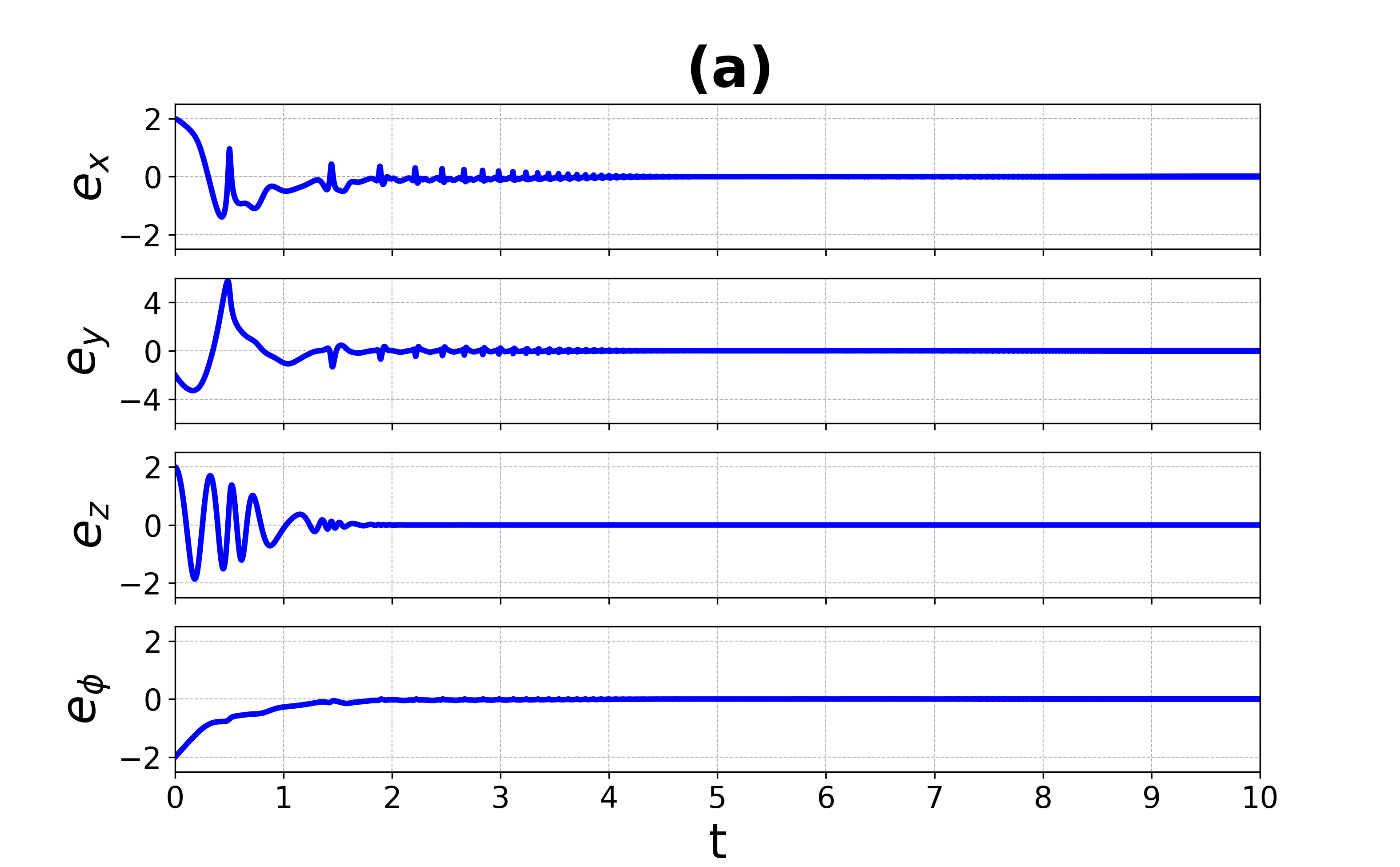}
    \includegraphics[width=10.0cm,height=5.0cm]{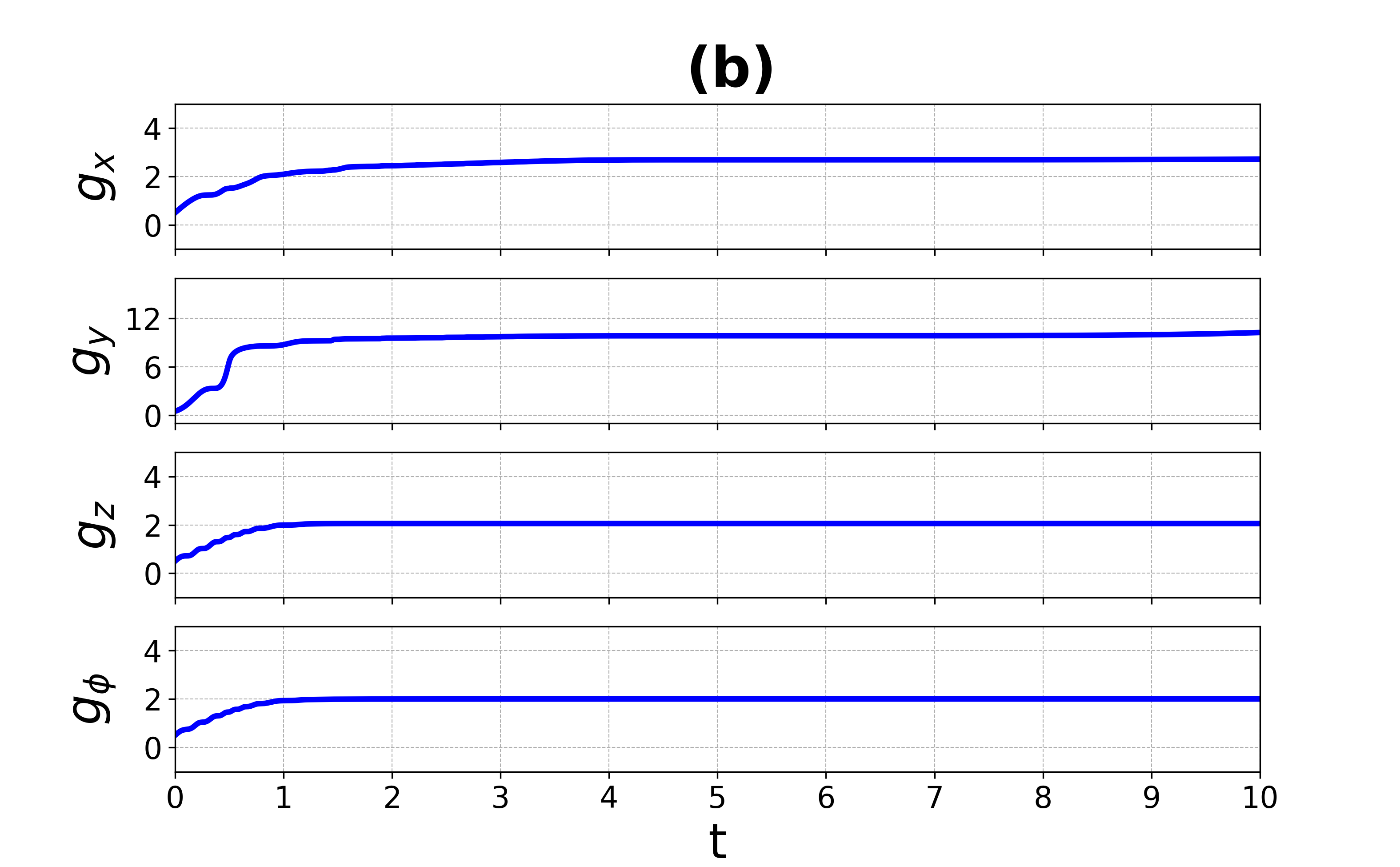}
\includegraphics[width=10.0cm,height=5.0cm]{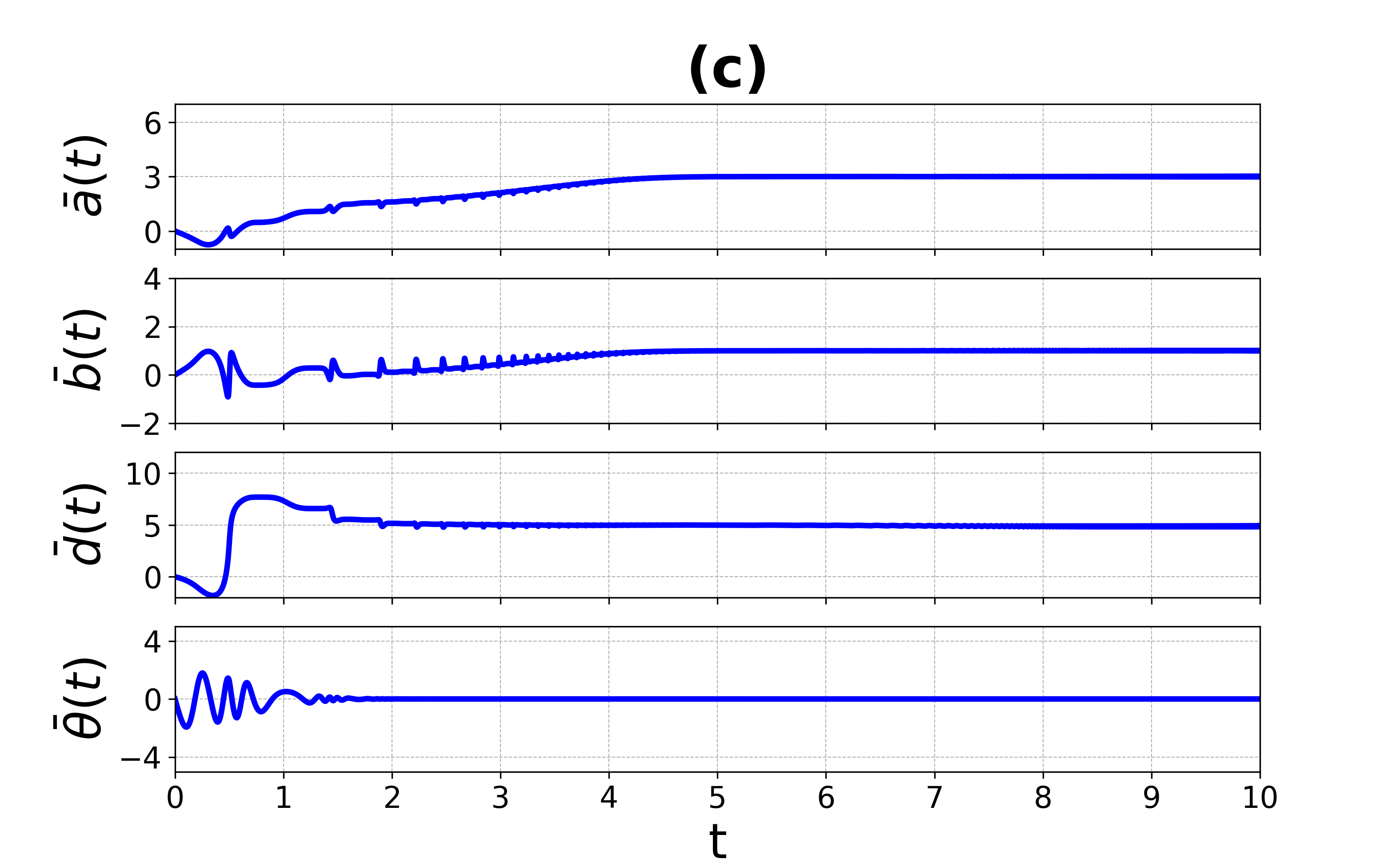}
\caption{Time series of the errors in \textbf{(a)}, the feedback gain parameters in \textbf{(b)}, and the adaptive parameters in \textbf{(c)}. Parameter values of the drive and response systems: see main text.}\label{fig:ds_sim}
\end{figure}

In Fig. \ref{fig:ds_k_1_k_2}, we investigate the stability of the synchronization manifold in the $k_1-k_2$ parameter space with a $200 \times 500$ grid points. The red region represents parameter values at which the reduced-order adaptive synchronization manifold  $\mathcal{M}_s$ is unstable, i.e., since $\dot{V}>0$, and the green region represents parameter values at which $\mathcal{M}_s$ is stable, i.e., since $\dot{V}=0$. This result indicates that the larger values of the magnetic gain parameter $k_2$, which describes the degree of polarization and magnetization of the neuron, destabilize the reduced-order synchronization, especially at smaller magnetic gain $k_1$. 

\begin{figure}[H]
    \centering
\includegraphics[width=8.0cm,height=5.0cm]{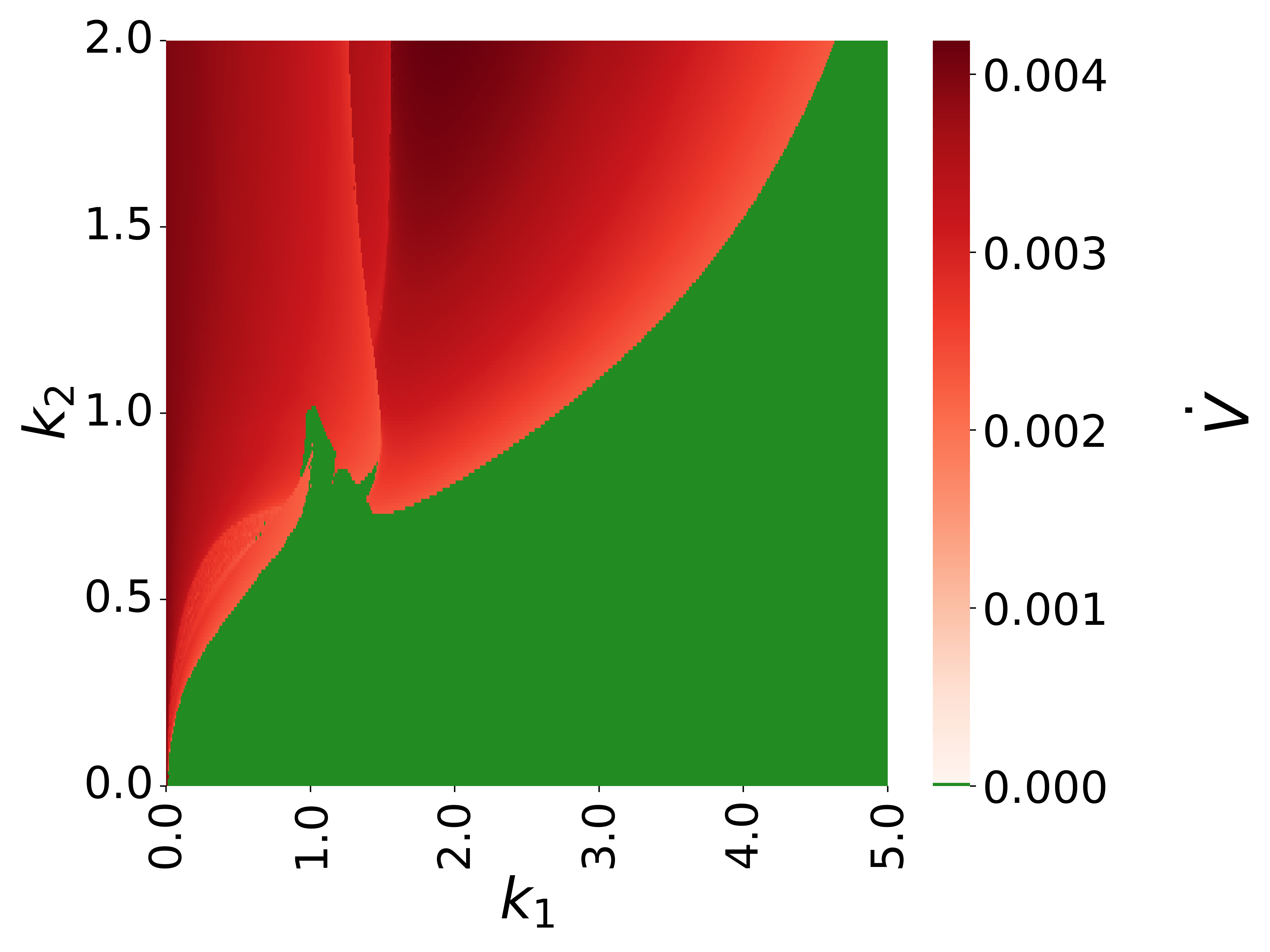}
    \caption{Variation of $\dot{V}$ in Eq. \eqref{eq:32} with the gain feedback parameters $k_1$ and $k_2$. Reduced-order adaptive synchronization is stable (unstable) in the green (red) region where $\dot{V}=0$ ($\dot{V}>0$). Here, $l_x = l_y = l_z = l_{\phi} = 1.0$. Other parameter values: see the main text.}\label{fig:ds_k_1_k_2}
\end{figure}

\clearpage
\section{Machine learning approach to the reduced-order synchronization problem}\label{sec:ML}
In this section, we address the problem of reduced-order synchronization through the lens of reservoir computing (RC) based on echo state networks (ESNs). 
\subsection{Standard ESN implementation}
In reservoir computing, an RNN serves as a fixed, random, large-scale dynamical system known as the reservoir which processes time-dependent or sequential signals, making it especially effective for tasks involving memory or temporal sequences. In the sequel, we provide a concise overview of the standard RC implementation. The standard framework of ESN (see Fig. \ref{fig:esn_diagram}) \cite{Jaeger:2007,Jaeger_Haas_2004} consists of three main components (i) the input layer with fixed random weights $\mathbf{W_{\text{in}}}\in\mathbb{R}^{m\times n}$, which receives the input data; (ii) the reservoir characterized by internal weights $\mathbf{W_{\text{res}}}\in\mathbb{R}^{m\times m}$ with a specified sparsity (typically $m \gg n$); and (iii) the output layer with a weight matrix $\mathbf{W_{\text{out}}}$, which is usually trained using regularized linear regression.

The RNN in the reservoir uses the dynamics of the reservoir states $\mathbf{r}(t) \in \mathbb{R}^m$, with initial conditions $\mathbf{r}(0)=0$, as a dynamic memory for past inputs, and  updates according to:
\begin{equation}\label{eq:ESN}
\mathbf{r}(t) = (1 - \alpha)\mathbf{r}(t-1) + \alpha \tanh \big[\mathbf{W_{\text{res}}} \cdot \mathbf{r}(t-1)  + \mathbf{W_{\text{in}}} \cdot \mathbf{u}(t)\big],
\end{equation}
where $\alpha\in(0,1]$ is the leaky coefficient (which controls how much the previous reservoir state $ \mathbf{r}(t-1) $ influences the next state $\mathbf{r}(t)$. A small $ \alpha $ makes the system more memory-driven, while a large $ \alpha $ makes it more input-driven). The nonlinear activation function is represented by $\tanh[\cdot]$, and $\mathbf{u}(t)\in\mathbb{R}^n$ is the input data vector, coming from, e.g., experiments, where measurements (data) are collected, or the numerical simulations (integration) of a dynamical system 
\begin{equation}\label{eq:real_dynamics}
\frac{d\mathbf{u}(t)}{dt}=F(\mathbf{u}(t)),
\end{equation}
where $F:\mathbb{R}^{n}\to\mathbb{R}^{n}$, modeling behavior of a given system, e.g., our HR neuron given in Eq. \eqref{eq:master}.

\begin{figure}
    \centering
\includegraphics[width=10.0cm,height=5.0cm]{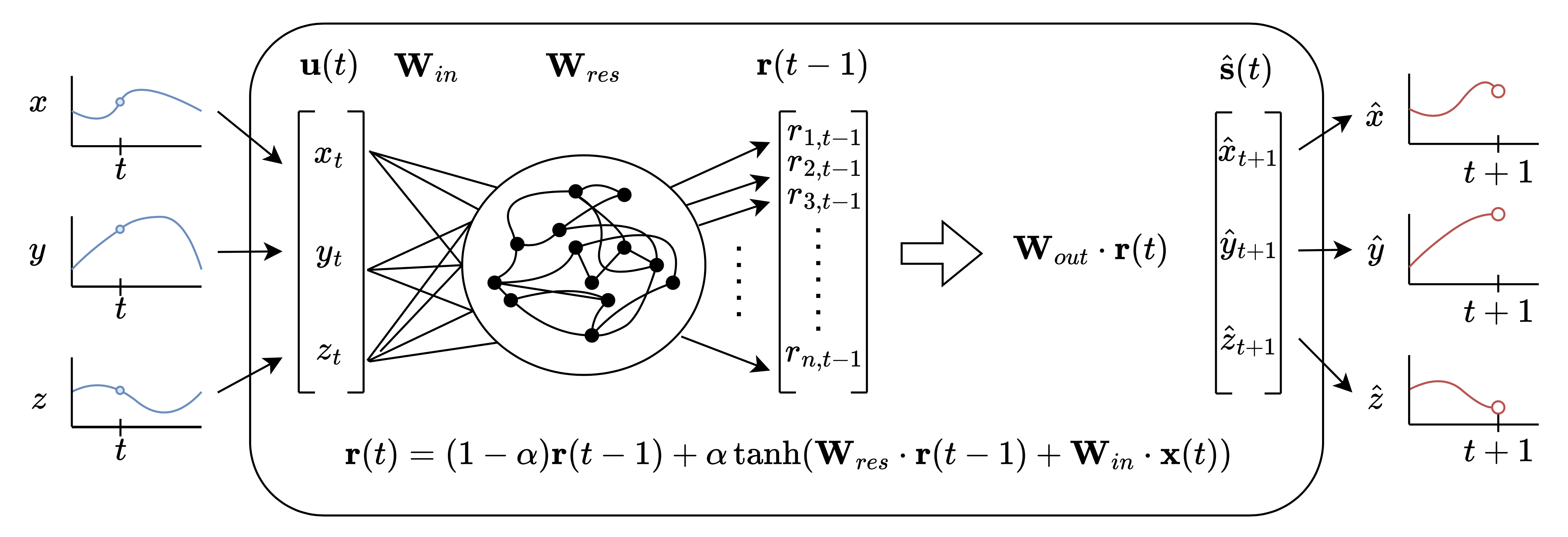}
    \caption{Block diagram illustrating the standard ESN architecture.}\label{fig:esn_diagram}
\end{figure}

The weight matrices $\mathbf{W_{\text{res}}}$ and $\mathbf{W_{\text{in}}}$ are randomly initialized matrices that are unchanged during the training and prediction phase. The fixed weights of $\mathbf{W_{\text{in}}}$ are randomly and uniformly sampled in  $[-1,1]$, and the adjacency matrix $\mathbf{W_{\text{res}}}$ of the reservoir has a specified sparsity and node degree, often uniformly distributed in  $[-1,1]$ and then adjusted by the spectral radius $p:= \max \{ |\lambda_i|: \lambda_i \text{ is an eigenvalue of } W_{\text{res}} \}>0$, which controls the ability of the reservoir to maintain the two desired properties of a good reservoir---echo state and computational capacity properties \cite{jaeger2004harnessing, Memory_versus_non-linearity}. In this paper, the weighted adjacency matrix $\mathbf{W_{\text{res}}}$ consists of random Erdős-Rényi without self-loops and a link probability of $0 < q\leq 1$.
The output $\hat{\mathbf{s}}(t)\in\mathbb{R}^{n}$ (see Fig. \ref{fig:esn_diagram}) of the ESN is given by:
\begin{equation}\label{eq:ml_esn_out}
\hat{\mathbf{s}}(t) = \mathbf{W}_{\text{out}} \cdot \mathbf{r}(t),
\end{equation}
where $\mathbf{W}_{\text{out}}\in\mathbb{R}^{n\times m}$ is the output weight matrix. Remarkably, unlike other neural networks, only $\mathbf{W}_{\text{out}}$ is trained in the standard ESN scheme.

To train the ESN model (i.e., to train $\mathbf{W}_{\text{out}}$), we collect true data given by the solutions of a dynamical system in Eq.\eqref{eq:real_dynamics} 
to generate $ N_{\text{train}} $ input states $\{\mathbf{u}(t)\}_{t=0,..., N_{\text{train}}}$. Next, we use these input states to drive the reservoir dynamics in Eq.\eqref{eq:ESN}, starting from $\mathbf{r}(0) = 0 $, and obtain the corresponding reservoir states $\{\mathbf{r}(t)\}_{t=0,..., N_{\text{train}}}$. We define the targets $\{\mathbf{s}(t)\}_{t=0,..., N_{\text{train}}}$ as $\mathbf{s}(t) = \mathbf{u}(t+1)$. We remove the first $N_{\text{transient}}$ columns of $\mathbf{S}$ and $\mathbf{R}$, where $\mathbf{S}$ and $\mathbf{R}$ are matrices with columns $\{\mathbf{S}(t)\}$ and $\{\mathbf{r}(t)\}$, for $t = N_{\text{transient}}, ...,N_{\text{train}}$, respectively \cite{Lukoševičius2012}. We then compute the output weights $\mathbf{W}_{\text{out}}$ to best fit the training data by solving the Ridge Regression problem with Tikhonov regularization:
\begin{equation}\label{eq:5}
 \mathbf{W}_{\text{out}} = \mathbf{S}\mathbf{R}^T (\mathbf{R}\mathbf{R}^T + \lambda_0\mathbb{I})^{-1},
\end{equation}
where $\mathbb{I}$  is identity matrix. The regularization parameter $ \lambda_0> 0 $ helps prevent over-fitting by mitigating ill-conditioning of the weights.

The trained model is subsequently used to predict the dynamics of Eq. \eqref{eq:real_dynamics} $N_{\text{pred}}$ steps into the future. In this process, the output of the model given by  Eq. \eqref{eq:ml_esn_out}, i.e., $\hat{\mathbf{s}}(t) = \mathbf{W}_{\text{out}} \cdot \mathbf{r}(t)$, in each time step serves as the new input $\mathbf{u}(t+1)$ in Eq. \eqref{eq:ESN} for the following step, for $N_{\text{train}} \leq t \leq N_{\text{pred}}$. 

 The root mean square error (RMSE) defined in  Eq. \eqref{eq:RMSE} is our measure of the prediction performance of the ESN model.
\begin{equation}\label{eq:RMSE}
RMSE= \sqrt{\dfrac{1}{N_{\text{pred}} - N_{\text{train}}}\sum_{t=N_{\text{train}}}^{N_{\text{pred}}}\big[\mathbf{s}(t) - \hat{\mathbf{s}}(t)\big]^2},
\end{equation}
where $\mathbf{S}$ and  $\hat{\mathbf{S}}$ are matrices with columns $\{\mathbf{s}(t)\}$ containing the true dynamics of the system in Eq. \eqref{eq:real_dynamics} and columns $\{\hat{\mathbf{s}(t)\}}$ containing the output (predicted data) of the system for $t = N_{\text{train}}, ..., N_{\text{pred}}$, respectively. 

Tuning the hyper-parameters $\{m, \alpha, q, p, \lambda_0\}$ is essential for achieving good prediction performance. We utilize a Bayesian hyper-parameter optimization algorithm \cite{ESN_bayes} from the scikit-optimize library \cite{skopt} to obtain an ESN that achieves a small RMSE on the prediction evaluation dataset.

In our data-driven approach to the reduced-order synchronization problem, we generate input datasets, namely, $[x(t),y(t),z(t),w(t),\phi(t)]^T=\mathbf{u}(t)\in \mathbb{R}^5$ and $[x_r(t),y_r(t),z_r(t),\phi_r(t)]^T=\mathbf{u}_r(t)\in \mathbb{R}^4$ from the drive and response of the 5D and 4D HR neuron models, respectively. This is done by numerically integrating Eqs. \eqref{eq:master} and \eqref{eq:response} without controllers ($U = 0$) using the fourth-order Runge-Kutta algorithm with a step size of $dt = 0.1$ and for a total time step of $N = 22000$, from which we discarded the first $20000$ transient solutions. We sample the training data at $2dt$ step. We set the bifurcation parameters to $k_1 = 0.21$ and $k_2 = 0.4$, with other parameters as previously defined, resulting in a chaotic supra-threshold bursting time series in the 5D and 4D HR neuron models.

\begin{figure}
    \centering
\includegraphics[width=11.0cm,height=5.0cm]{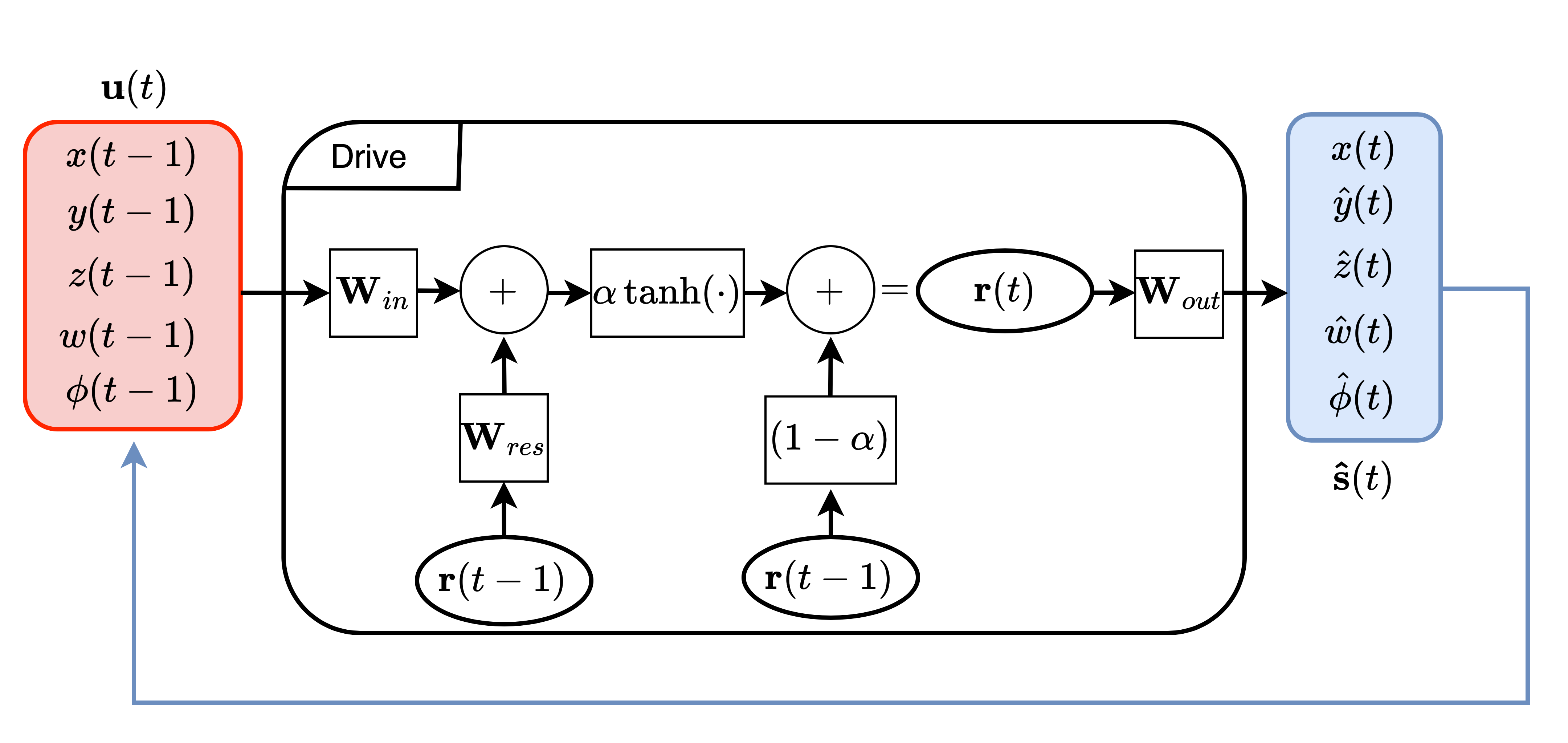}
    \caption{Block diagram illustrating the standard ESN architecture for the drive 5D HR neuron. During the training phase, the data in the red box serves as input to the ENS. During the prediction phase, the output data in the blue box replaces the red box and becomes input data and the ENS becomes autonomous.}\label{fig:esn_block-diagram}
\end{figure}

We set $N_{transient}$=300 and train the systems with the time series in the interval $[20000, 20600]$  and use the time series in the interval $[20601, 21000]$ to find the best hyper-parameters of the ESN, i.e., the ESN (see in Fig. \ref{fig:esn_block-diagram}) is used to predict the time series of the drive system given by Eq. \eqref{eq:master}. We train the models with the optimized hyper-parameters to predict the dynamics in the interval $[20601,21200]$. 

Figure \ref{fig:esn} \textbf{(a)} shows the performance of the drive ESN (red) and the training/evaluation data of Eq. \eqref{eq:master} (blue). The errors between the two time series can be seen in Fig. \ref{fig:esn} \textbf{(b)}. The drive ESN achieves an RMSE of \num{4.19e-2}. However, due to the chaotic nature of time series data \cite{Ramadevi2022}, prediction accuracy quickly declines over time, leading to a rapid and significant widening of error bounds as the prediction horizon extends.

This divergence indicates that, over time, the attractors predicted by our standard ESN would deviate from the true attractors of the 4D and 5D chaotic HR neuron models. Our goal is to adaptively achieve asymptotic synchronization (in reduced order) of the accurately predicted time series of these neuron models, rather than attempting to synchronize (which is still possible) time series that have deviated from their true trajectories. One approach to this problem is the reservoir observer technique.

\begin{figure}[t]
    \centering
        \includegraphics[width=12.0cm,height=6cm]{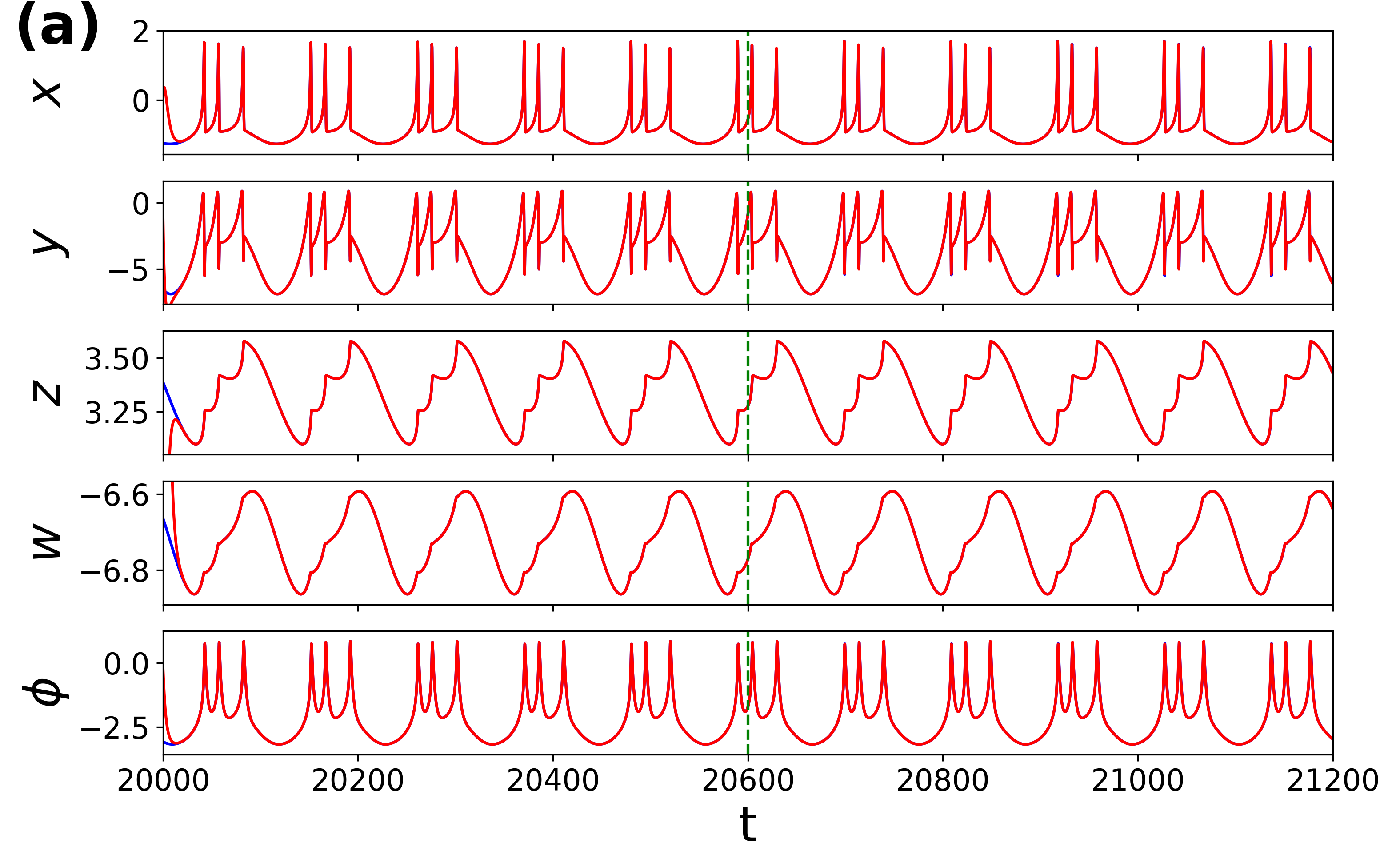}
        \includegraphics[width=12.0cm,height=6cm]{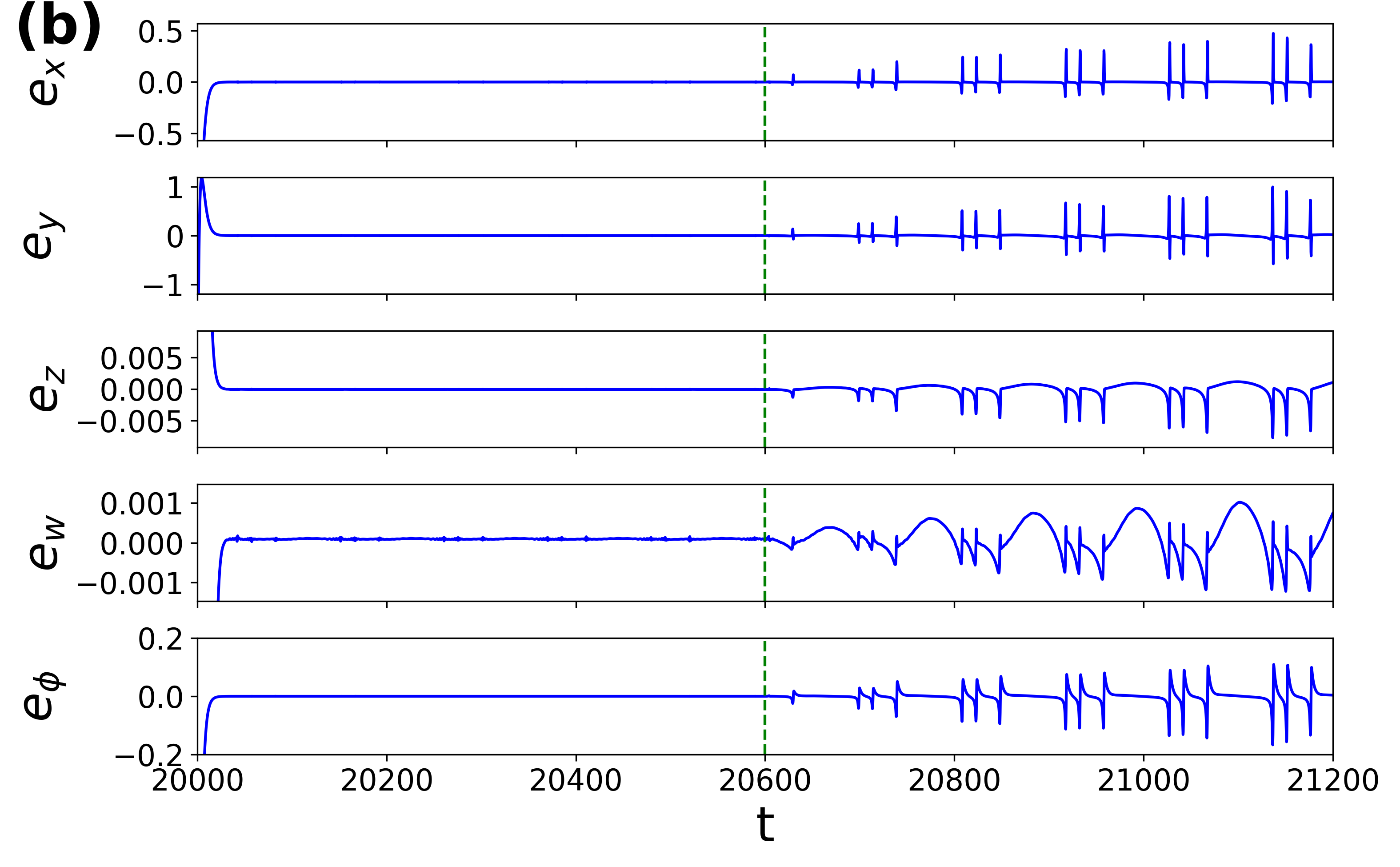}        
    \caption{\textbf{(a)} Time series of the 5D HR neuron (blue) and ESN time series of the 5D HR neuron  (red). \textbf{(b)} Error between the two time series in \textbf{(a)} grows with time. The green vertical line in \textbf{(a)} and \textbf{(b)} separate the training phase (left) from the prediction phase (right). The 4D HR neuron has a similar behavior (figures not shown).}\label{fig:esn}
\end{figure}



\subsection{ Reservoir observer (RO)}
A reservoir observer (RO) is a technique for estimating a system's state when only partial observations are available continuously; however, a limited-time series dataset containing both the observed and later unobserved variables exists. The RO is trained and evaluated on the limited data set where the input is the observed variables, and the output is the later unobserved variables. The reservoir of the RO continually receives measured variables as input and generates estimates of the unmeasured variables, effectively reconstructing the system's state under the condition of observability \cite{lu2017reservoir}.  
In this section, we employ the RO method to enhance the accuracy and extend the prediction time of our chaotic attractors.

 In the RO paradigm, one or more variables from the input data are continuously observed and are known in advance for the entire duration of the time series, so they do not need to be predicted. From the HR models in Eqs. \eqref{eq:master}  and \eqref{eq:response}, we choose $x(t)$ and $x_r(t)$ as our observed variables. From a neuroengineering perspective, this choice is justified because measuring the membrane potential of neurons is considerably easier using well-established techniques like patch-clamping (invasive) or EEG (non-invasive), which give us full access to the membrane potential variables $x(t)$ and $x_r(t)$. In contrast, tracking the movement of fast and slow ions across the membrane requires more complex and specialized methods, such as ion-specific fluorescent indicators and calcium imaging, giving us little access to the variables $[y(t), z(t), w(t), \phi(t)]$ and  $[y_r(t), z_r(t), \phi_r(t)]$ which we will need to predict.
 
\begin{figure}[H]
    \centering
\includegraphics[width=10.0cm,height=10cm]{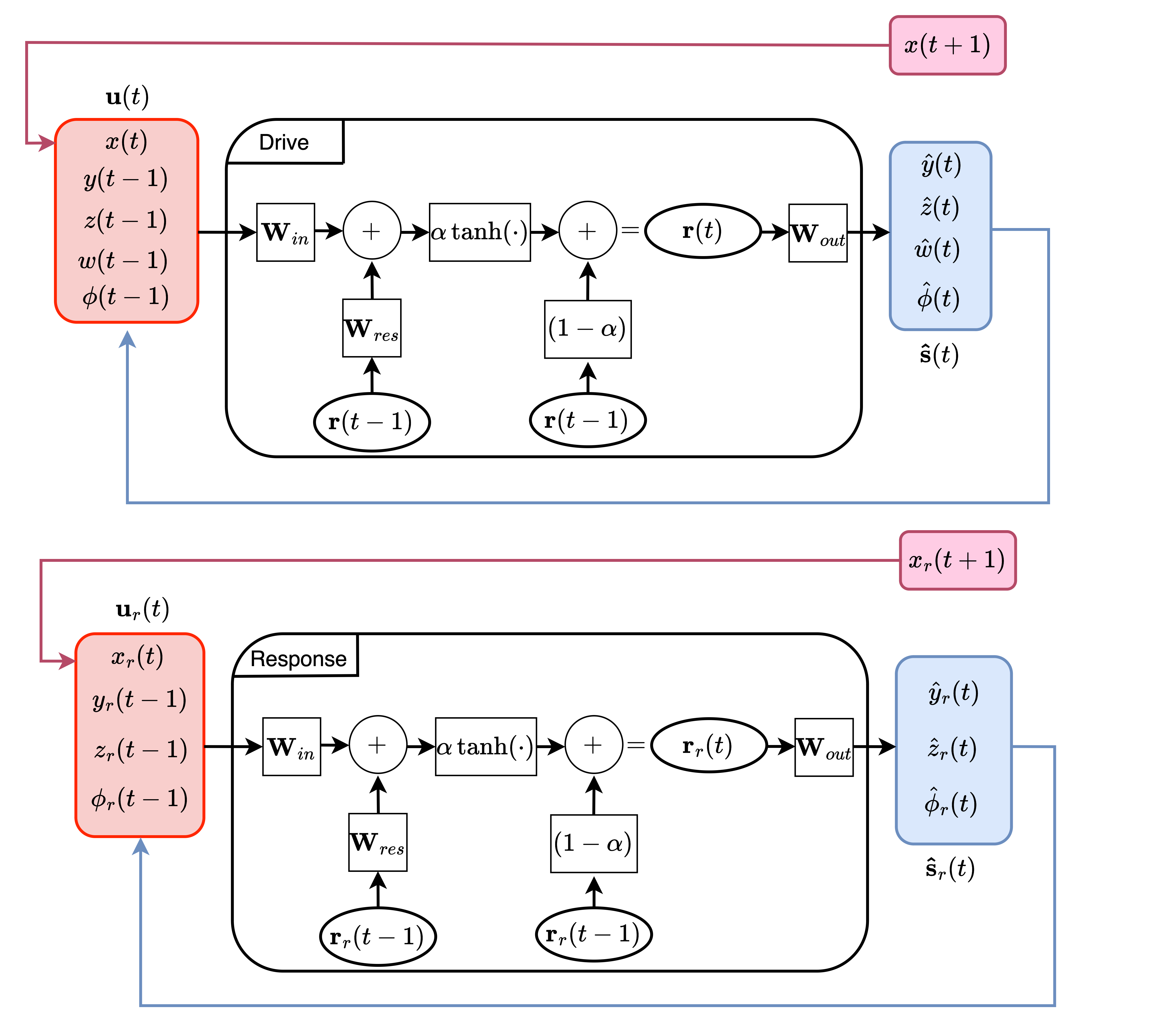}
    \caption{Block diagrams illustrating the RO architectures for the 5D HR neuron (top) and the 4D HR neuron (bottom). In the prediction phase, the output data in the blue box concatenates with the new observed variable [$x(t+1)$ or $x_r(t+1)$] and replaces the red box to become the input.}\label{fig:esn_observer_diagra}
\end{figure}

In Fig. \ref{fig:esn_observer_diagra}, we illustrate the architectures of the RO for the 5D and 4D HR models in a block diagram. We adjust the outputs $\hat{\mathbf{s}}(t) = [y(t), z(t), w(t), \phi(t)]$ and $\hat{\mathbf{s}}_r(t) = [y_r(t), z_r(t), \phi_r(t)]$, which represent the unknown variables that need to be predicted using ESN, at the current time step. To incorporate controllers for all unknown variables, we concatenate the observed variables with the previous outputs to create the input vectors $\mathbf{u}(t)$ and $\mathbf{u}_r(t)$.

\begin{figure}[H]
    \centering
        \includegraphics[width=12.0cm,height=5.5cm]{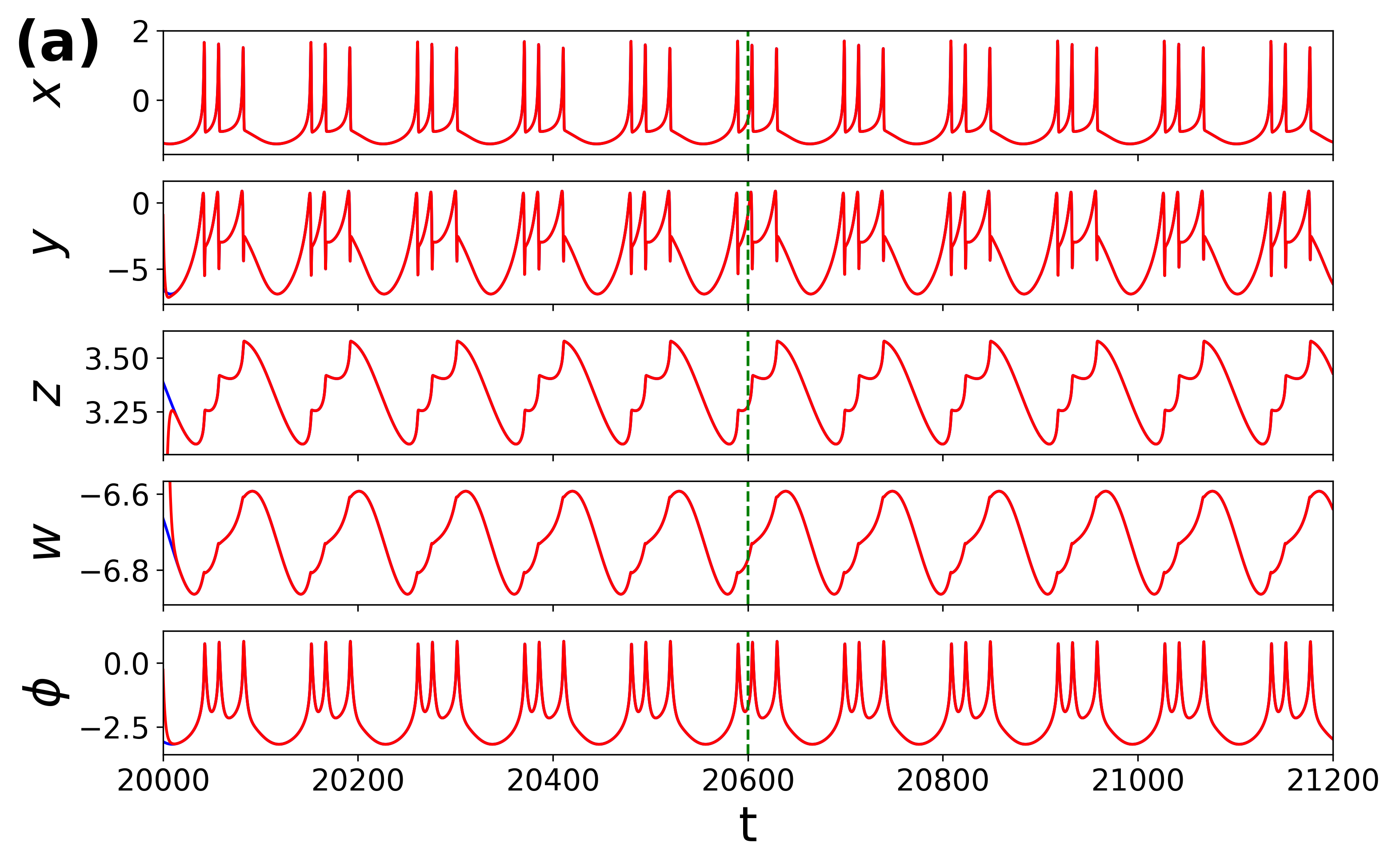}
        \includegraphics[width=12.0cm,height=5.5cm]{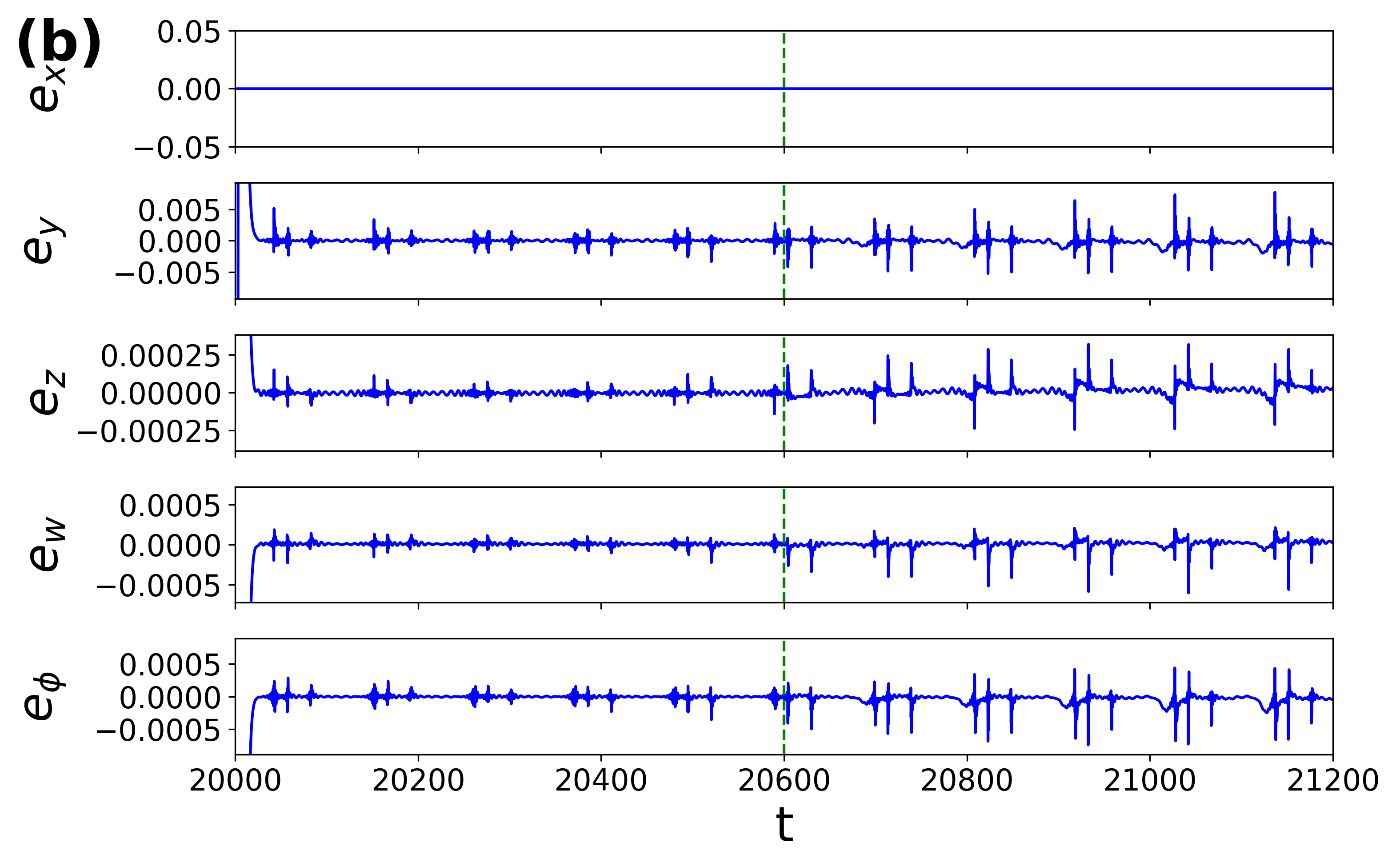}      
        \includegraphics[width=12.0cm,height=5.5cm]{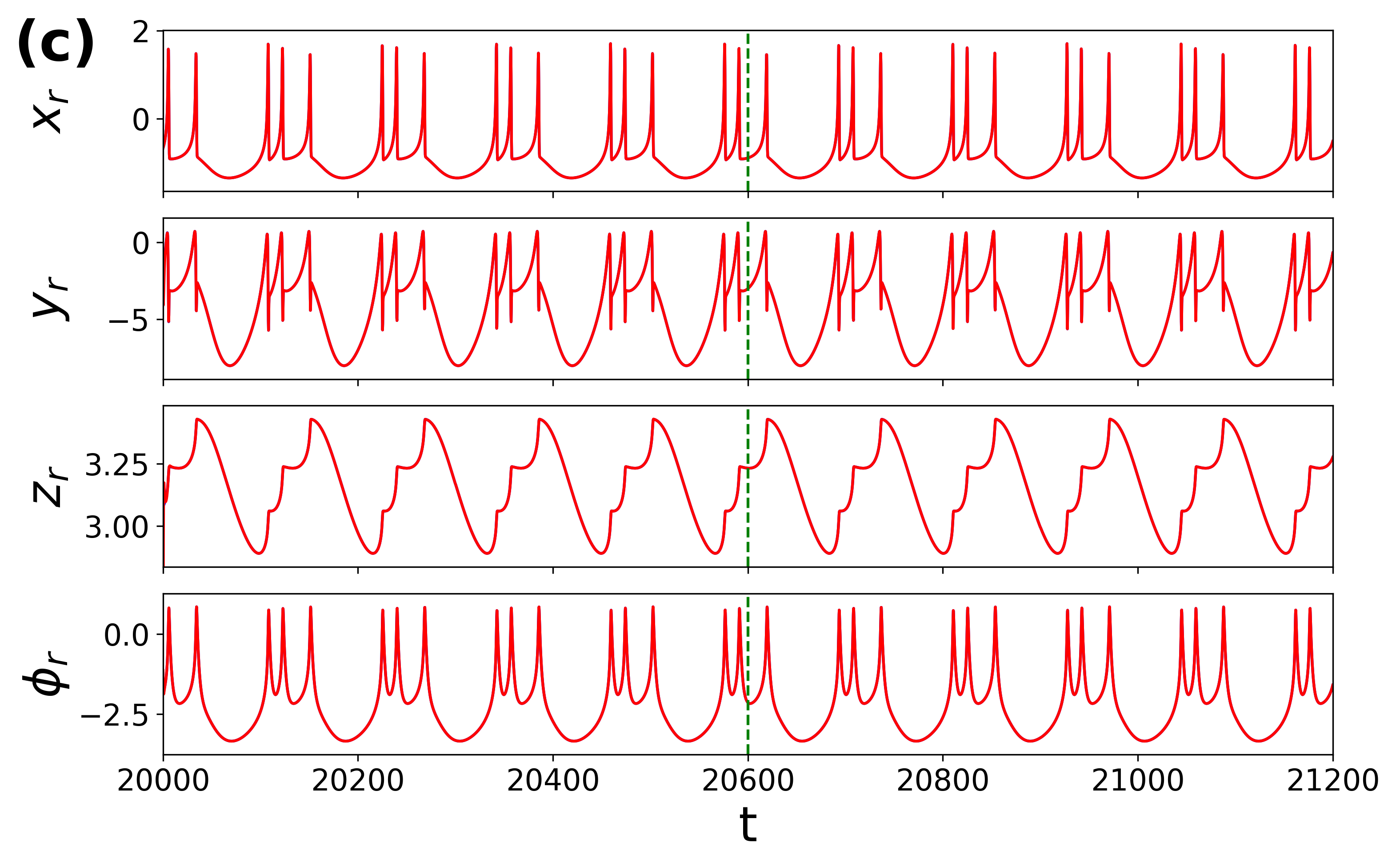}
        \includegraphics[width=12.0cm,height=5.5cm]{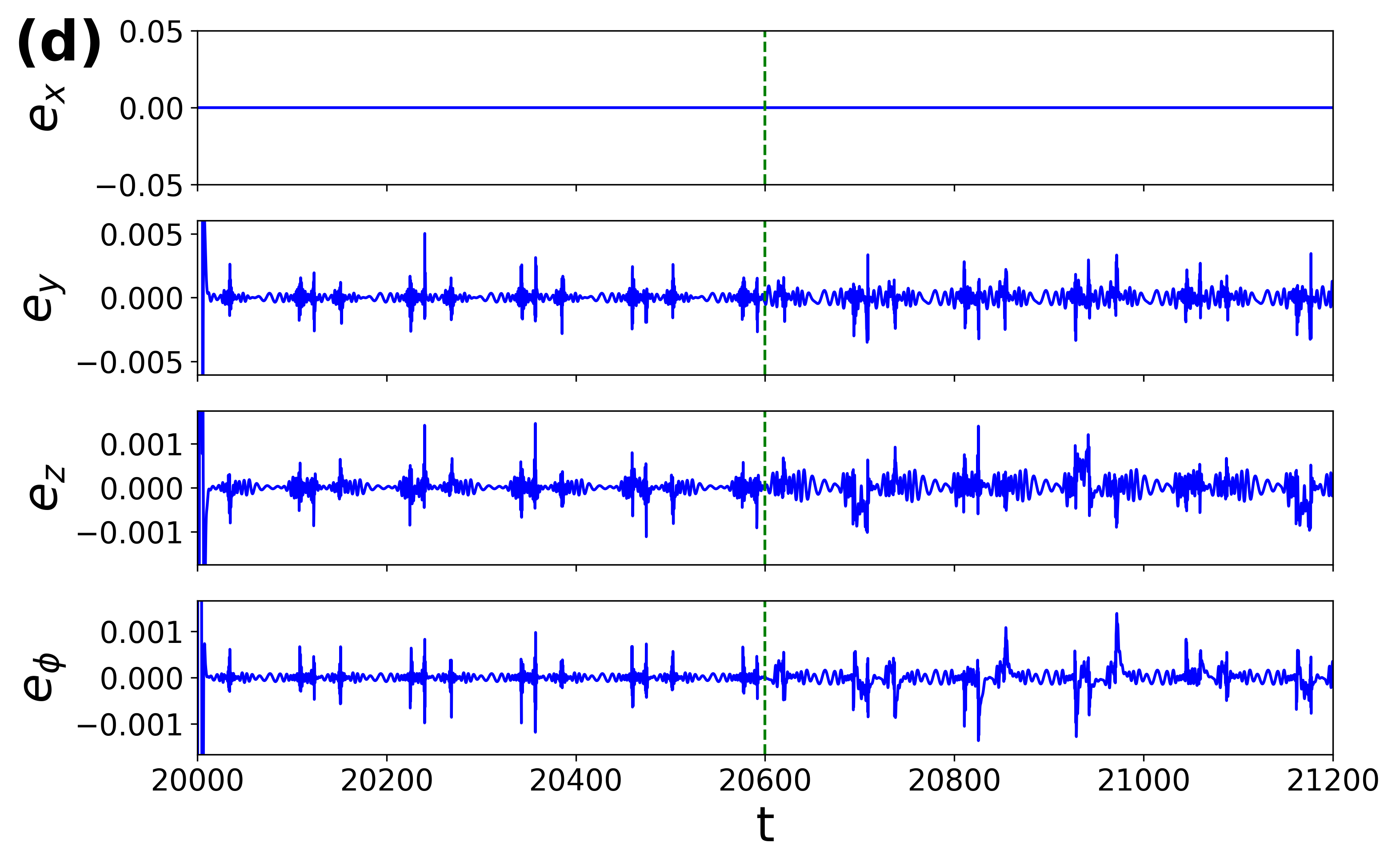}     
    \caption{\textbf{(a)} Time series of the 5D HR neuron (blue) and RO time series of the 5D HR neuron (red). 
    \textbf{(b)} Errors between the two chaotic time series in (a) are smaller for a significant time.
    \textbf{(c)} Time series of the 4D HR neuron (blue) and RO time series of the 4D HR neuron (red).  \textbf{(d)} Errors between the two time series in (c) are small and do not grow with time. 
The green vertical lines in \textbf{(a)}, \textbf{(b)}, \textbf{(c)}, and \textbf{(d)},  separate the training phase (left) from the prediction phase (right).}\label{fig:esn_observer}
\end{figure}

Figure \ref{fig:esn_observer}\textbf{(a)} and \textbf{(c)} illustrate the performance of the RO method in predicting the 5D drive and the 4D response HR neuron models (red) and the simulated time series (blue) for the training phase $[20000,20600]$ and prediction phase $[20601,21200]$. 
The errors between the time series of the 5D drive HR model and its RO-predicted time series are depicted in Fig. \ref{fig:esn_observer}\textbf{(b)}, where the drive RO achieves an RMSE of \num{3.4e-4}. Similarly, Fig. \ref{fig:esn_observer}\textbf{(d)} shows the errors between the time series of the 4D response HR model and its RO-predicted time series with a RMSE of \num{3.2e-4}.

Notice in Fig. \ref{fig:esn_observer} that the errors associated with the variables $x(t)$ and $x_r(t)$ are exactly zero since they are observed directly in the RO. Comparing Figs. \ref{fig:esn} and \ref{fig:esn_observer}, it is evident that the RO models outperform the standard ENS models significantly. Specifically, the errors with the RO technique are consistently smaller and remain low over time, whereas the errors from the standard ENS model are relatively larger and tend to increase over time.



\subsection{Adaptive control of reduced-order synchronization}

We address the reduced-order synchronization problem of 4D and 5D HR neurons by synchronizing the two data-driven ROs depicted in Fig. \ref{fig:esn_observer_diagra}. To achieve this, we employ two control strategies: online control and online predictive control.
Since, $ x(t)$ and $ x_r(t)$ represent the observed variables of the drive and response ROs, respectively,  the control of the synchronization error, $e_x$, is straightforward because we can use $x(t)$ as the observed variable for both the drive and response ROs. In the following sections, we will examine the performance of these control algorithms in detail.

\subsubsection{Online control (OC) scheme}

The online control scheme is a method that allows a model to make real-time decisions and continuously ad\-apt based on streaming data or feedback from a changing environment. Unlike offline control, it updates continuously to handle evolving system behaviors and unforeseen events, making it valuable for achieving adaptive synchronization \cite{murphy_probml}.

 Having trained two RO models (see Fig. \ref{fig:esn_observer}) to predict their respective time series, we now focus on controlling the ROs to achieve reduced-order synchronization. The training/prediction and control phases are treated separately. During training and prediction, we employ the approach illustrated in Fig. \ref{fig:esn_observer_diagra}, while for control, we switch to the setup in Fig. \ref{fig:esn_c2}. In the control phase, we modify the input $\mathbf{u_r}(t)$ of the response RO, which comprises of the observed variable $x(t)$ and the previous output $\hat{\mathbf{s}}_r(t)$.\\ The error vector $\mathbf{e}(t) = [e_y(t), e_z(t), e_{\phi}(t)]^T$ is computed as the difference between the drive system's output $\hat{\mathbf{s}}(t)$ and the response system's output $\hat{\mathbf{s}}_r(t)$. The error is then used to control the response system's output, such that: $i(t) = i_r(t) + e_i(t)$ where $i \in \{y,z,\phi\}$. The control is illustrated in Fig. \ref{fig:esn_c2} by using $y(t), z(t)$, and $\phi(t)$ directly from $\hat{\mathbf{s}}(t)$.

We update $\mathbf{W}_{out}$ of the response RO according to
\begin{equation}\label{ml:eq_online}
    \tilde{\mathbf{W}}_{out}(t+1) = \tilde{\mathbf{W}}_{out}(t) - \lambda_1\mathbf{r}_r(t)\cdot \mathbf{e}(t),
\end{equation}
where $\lambda_1=0.001$ represents the learning rate, $\mathbf{r}_r(t)$ is the reservoir state of the response RO, and the error $\mathbf{e}(t) = [e_y(t), e_z(t), e_{\phi}(t)]^T$. The output weight $\tilde{\mathbf{W}}_{out}(t+1)$ of the response RO is updated whenever $\mathbf{e}(t)$ is non-zero. Fig. \ref{fig:esn_c2} shows a block diagram of the proposed online control algorithm, with the green box representing the online learning algorithm in Eq. \eqref{ml:eq_online}, responsible for updating $\tilde{\mathbf{W}}_{out}(t+1)$ in real-time.

\begin{figure}
    \centering
\includegraphics[width=10.0cm,height=10cm]{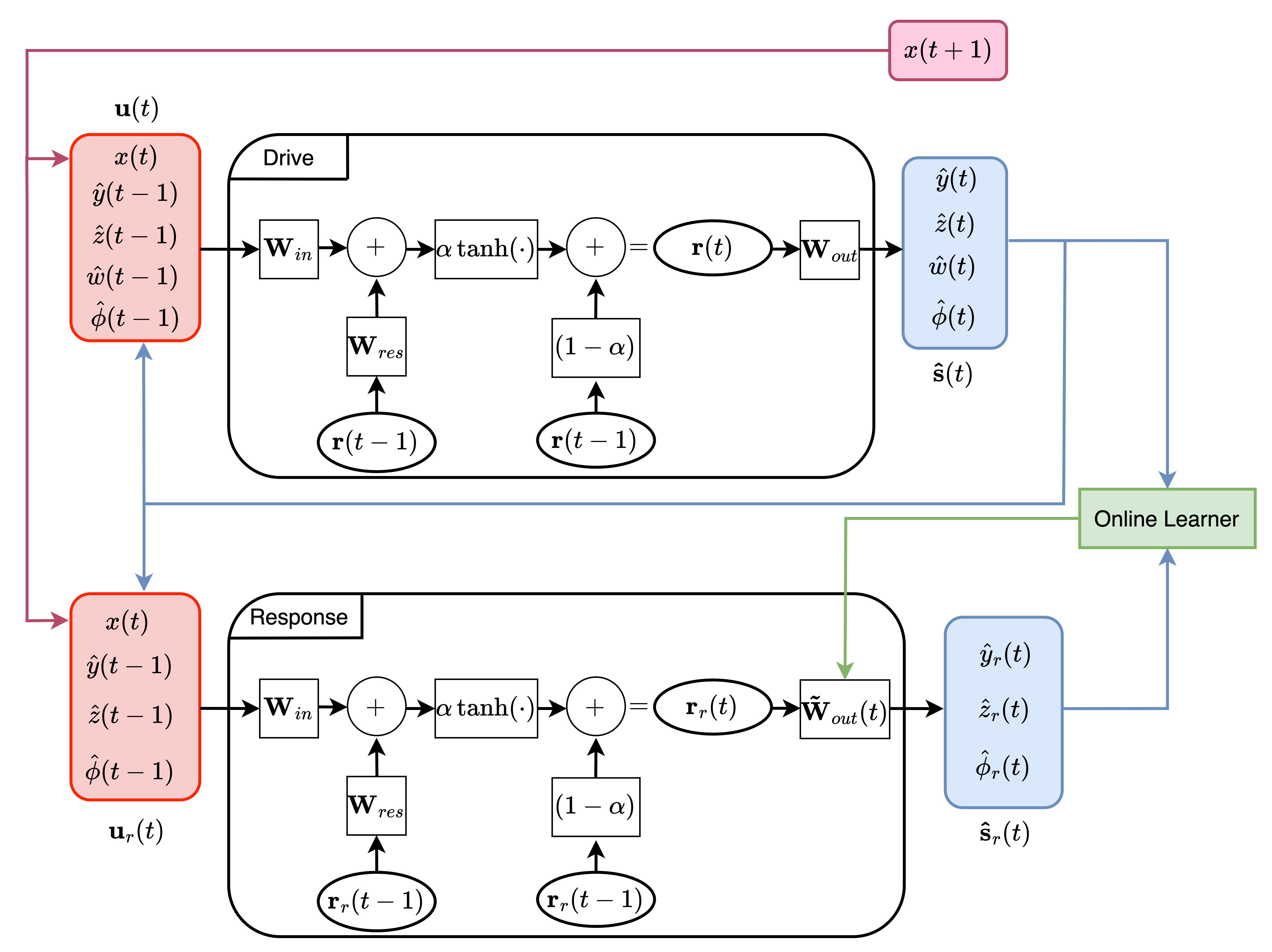}
    \caption{Block diagram illustrating the architecture during the control phase of the online control scheme.}\label{fig:esn_c2}
\end{figure}

Figure \ref{fig:esn_controller_1}\textbf{(a)} displays the training phase $[20000, 20600]$, prediction phase $[20600,21200]$, and the control phase $[21201,21800]$ of both ROs. The training and prediction phases correspond to those in Figs. \ref{fig:esn_observer}\textbf{(a)} and \textbf{(c)}. Upon application of the control algorithm, the time series synchronizes. Figure \ref{fig:esn_controller_1}\textbf{(b)} shows the error $\mathbf{e}(t) = [e_x(t),e_y(t), e_z(t), e_{\phi}(t)]^T$ in the control phase $[21200, 21800]$, which achieves a RMSE of \num{3.70e-2}.

\begin{figure}[H]
    \centering
        \includegraphics[width=12.0cm,height=6cm]{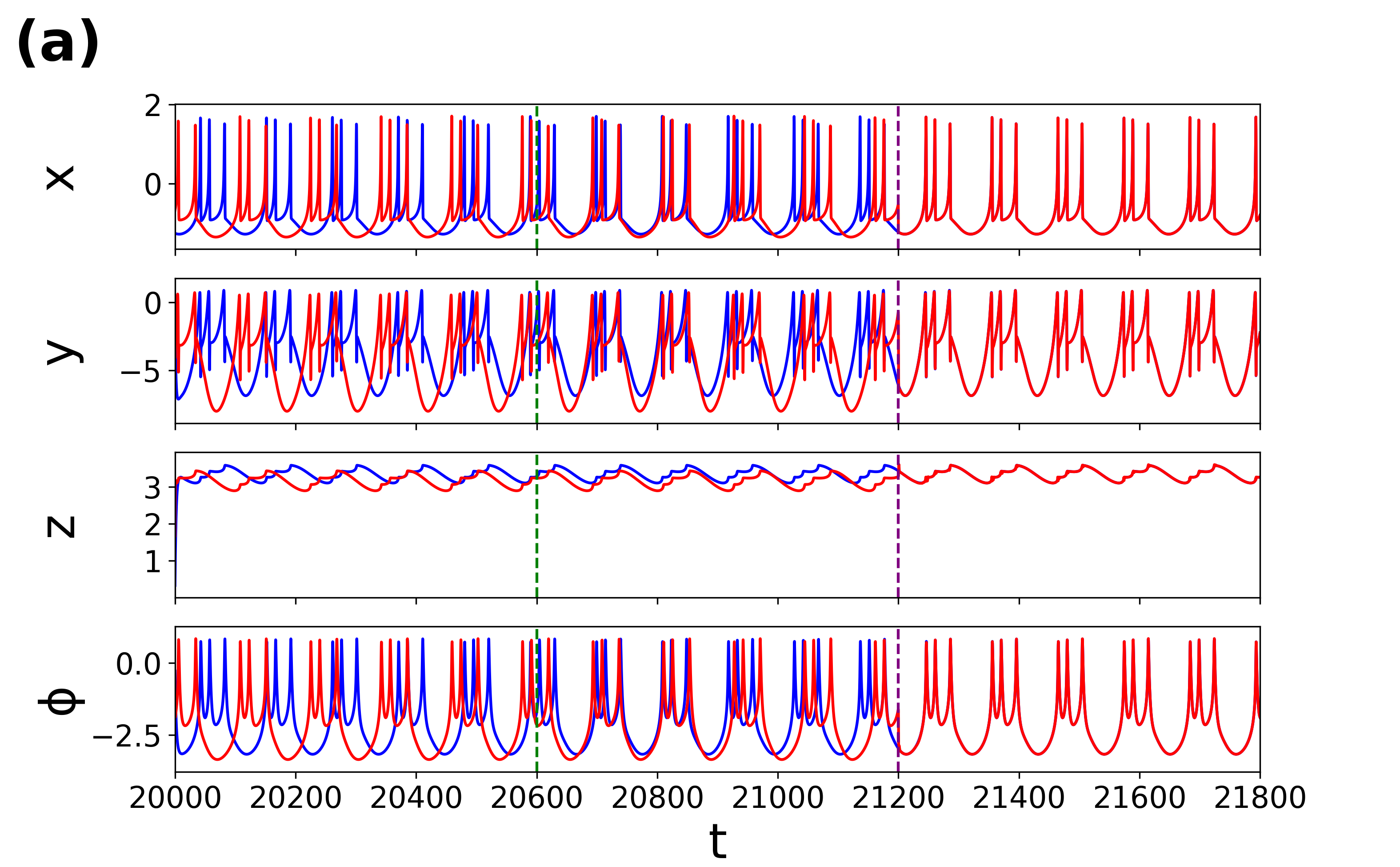}
        \includegraphics[width=12.0cm,height=6cm]{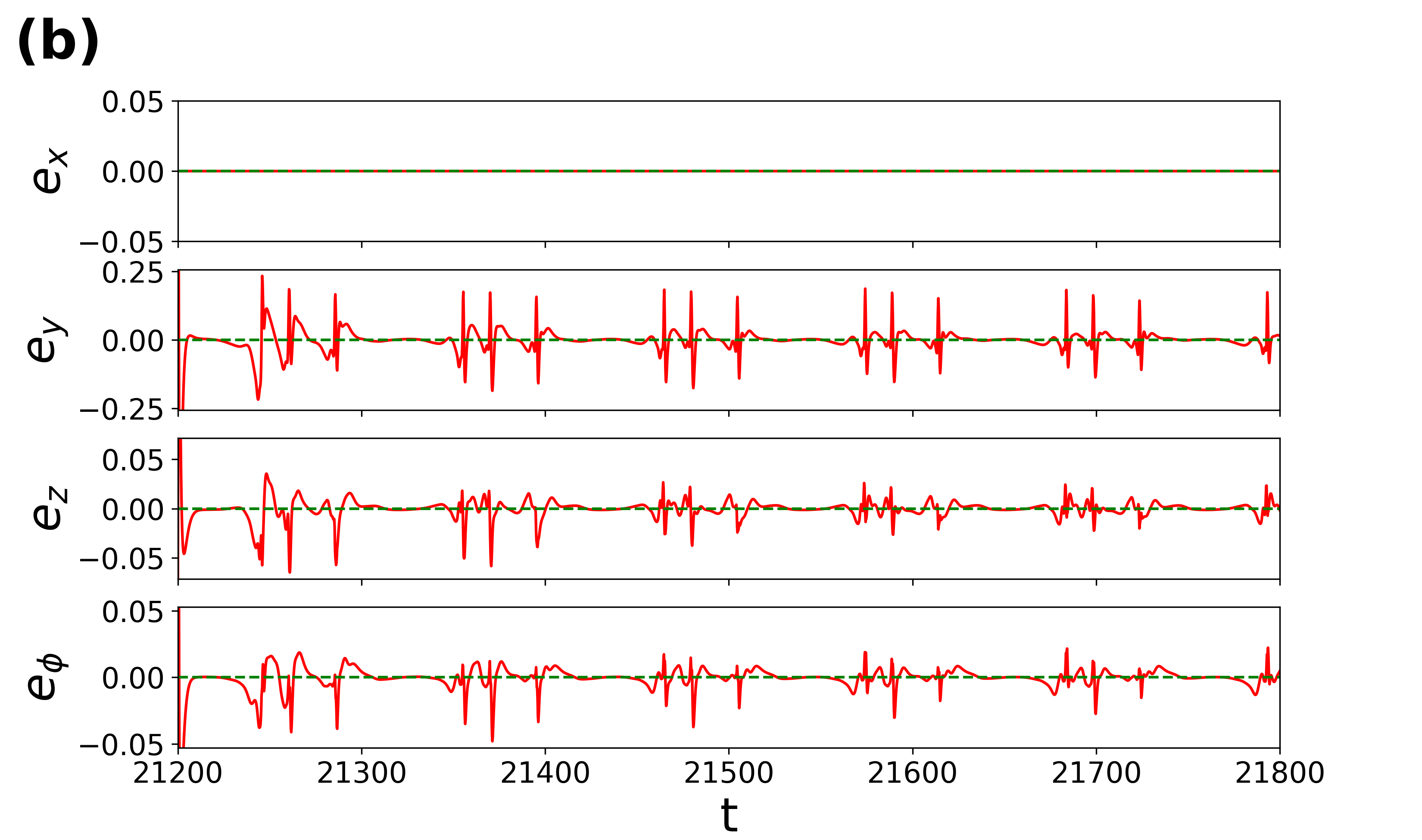}       
    \caption{Performance of the OC scheme synchronizing
ROs. \textbf{(a)} Time series of the drive RO (blue) and response RO (red). For training (before the vertical green line), inference (starts after the vertical green line), and control phases (starts after the vertical purple line).
    \textbf{(b)} Error between the drive and response ROs in the control phase.}\label{fig:esn_controller_1}
\end{figure}


\subsubsection{Online predictive control (OPC) scheme}

In the online control approach, we utilized the drive system's output $\hat{\mathbf{s}}(t)$ as input for the response system ${\mathbf{u}}_r(t+1)$. However, the online control approach is not optimal because the two systems are of different dimensions. We improve the performance of the online control scheme with a control scheme that resembles the model predictive control approach \cite{mpc}.

In our OPC scheme, we estimate the drive system's output $\hat{\mathbf{s}}(t+1)$ and call the estimate $\mathbf{k}_{est}(t+1)$. Our objective is to determine the input for the response system that yields $\hat{\mathbf{s}}_r(t+1) \approx \mathbf{k}_{est}(t+1)$. To obtain $\mathbf{k}_{est}(t+1)$, we need the observed variable $x(t+1)$, which is not available at the current time step $t$. We address this by training an additional ESN  to predict $\hat{x}(t+1)$ based on the current observed variable $x(t)$. We then use $\hat{x}(t+1)$ as a prediction for $x(t+1)$ in $\mathbf{u}_{est}(t+1)$, along with $\hat{\mathbf{s}}(t)$ and $\mathbf{r}(t)$ from the drive system, to solve Eqs. \eqref{eq:ESN} and \eqref{eq:ml_esn_out}. Figure \ref{fig:esn_transition} shows the error between the estimated $\mathbf{k}_{est}(t+1)$ and the dynamics of the drive ESN $\mathbf{\hat{s}}(t+1)$. See in Fig. \ref{fig:esn_transition} that the error bounds are very small, of the order of \num{e-4}. We then solve the following constrained optimization problem:

\begin{equation}
\begin{aligned}
& \underset{y_r(t),z_r(t),\phi_r(t)}{\text{min}}
& &\sum\big[RO_r(\hat{\mathbf{u}}_r(t+1)) - \hat{\mathbf{k}}_{est}(t+1)\big]^2, \\
& \text{subject to}
&&  y(t) - b  \leq y_r(t) \leq  y(t) + b, \\
&&& z(t) - b  \leq z_r(t)      \leq  z(t) + b,      \\
&&& \phi(t) - b \leq \phi_r(t)  \leq  \phi(t) + b,   \\
\end{aligned}
\end{equation}

where $RO_r$ refers to the response RO, $\hat{\mathbf{u}}_r(t+1) =[\hat{x}(t+1),y_r(t),z_r(t),\phi_r(t)]$, and $\hat{\mathbf{k}}_{est}(t+1) = [\hat{y}_k(t+1),\hat{z}_k(t+1),\hat{\phi}_k(t+1)]$. Since $\hat{x}(t+1)$ is an estimate, we regularize the solutions through bound constraints. We employ the
Limited-memory Broyden–Fletcher–Gold\-farb–Shan\-no algorithm \cite{L-BFGS-B} with bound constraints (L-BFGS-B) from the SciPy library \cite{2020SciPy-NMeth} to solve this optimization problem. The obtained solution is then used for the input of the response RO and concatenated with the newly observed variable $x(t+1)$ to form ${\mathbf{u}}_r(t+1)$.

\begin{figure}[H]
    \centering
        \includegraphics[width=12.0cm,height=8cm]{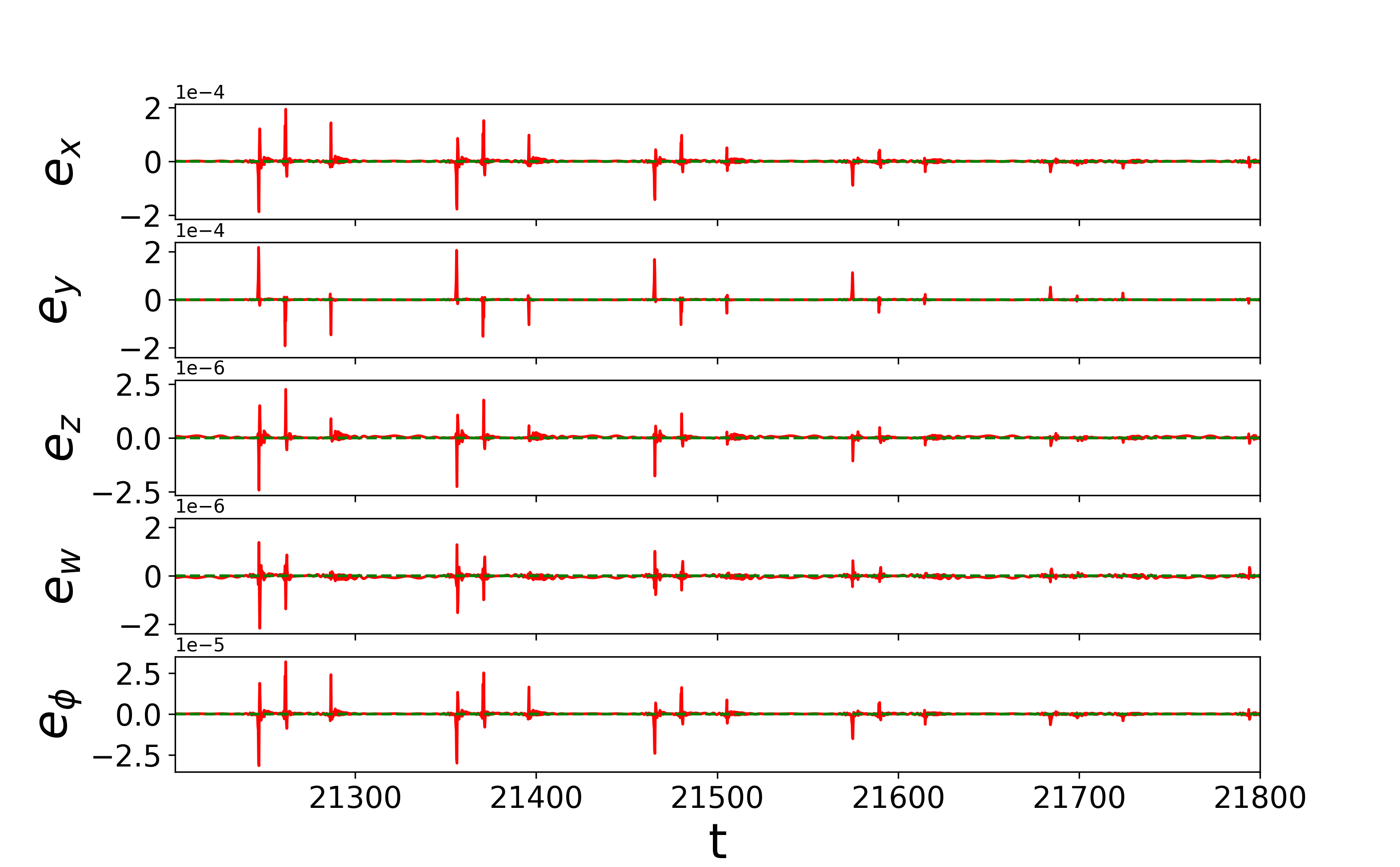}
    \caption{Error between $\hat{\mathbf{s}}(t+1)$ and $\mathbf{k}(t+1)$.}\label{fig:esn_transition}
\end{figure}

Figure \ref{fig:ML_c3_architecture} shows the complete architecture of our \textit{ad hoc} data-driven control algorithm for the reduced-order synchronization problem. Similarly to the online control scheme, we update $\tilde{\mathbf{W}}_{out}(t+1)$ by Eq. \eqref{ml:eq_online} as indicated by the green box. Yellow boxes indicate predicted values, which are discarded after the current iteration. The purple box represents the L-BFGS-B algorithm, which provides the input for the response RO.

\begin{figure}[H]
    \centering
    \includegraphics[width=17cm,height=13.0cm]{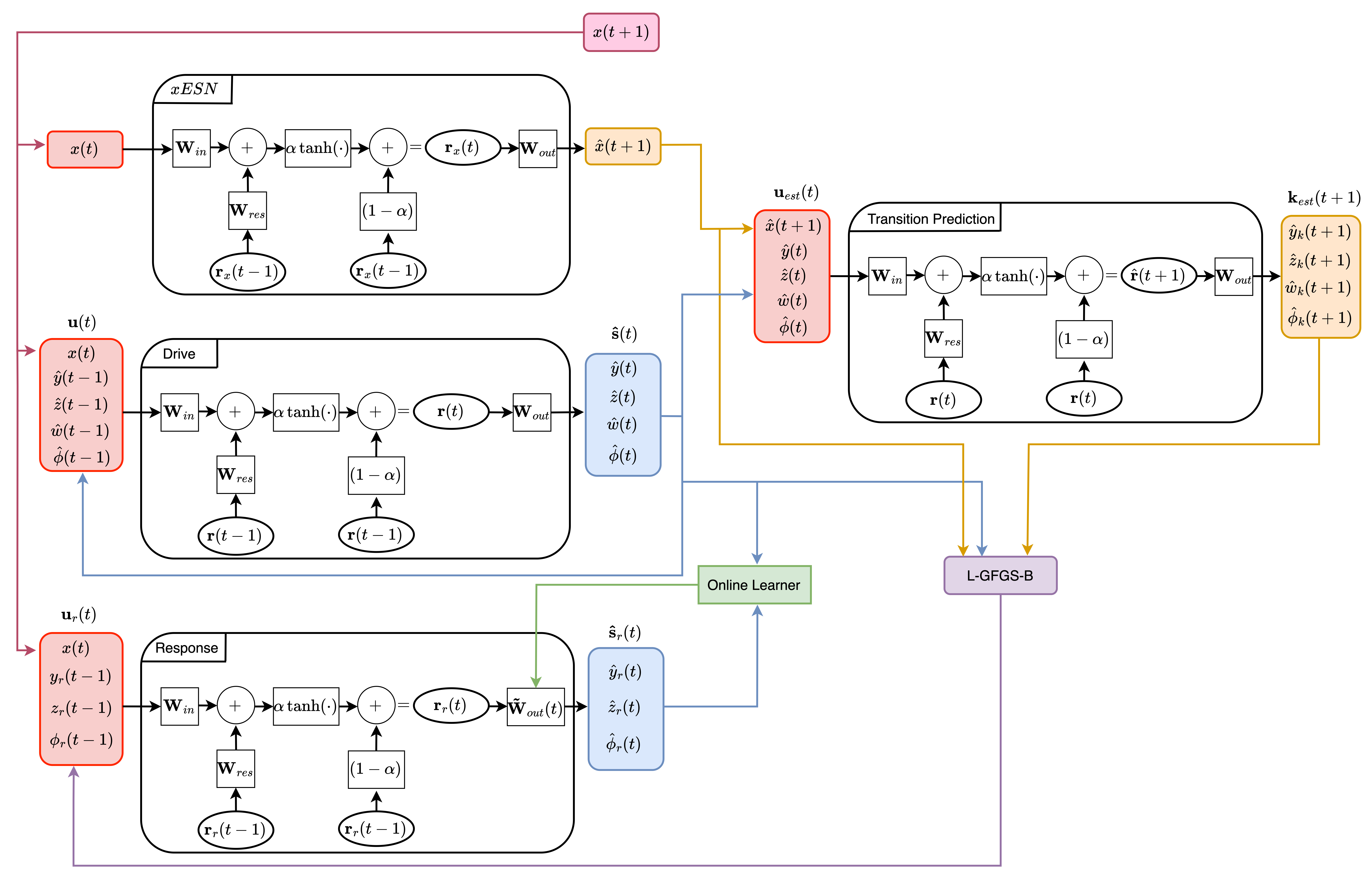}
    \caption{Block diagram illustrating the architecture during the control phase of the online predictive control scheme.}
    \label{fig:ML_c3_architecture}
\end{figure}

We set the bound parameter $b=0.5$ as it provided a good balance between low RMSE and a low error bound. Figure \ref{fig:esn_controller_2}\textbf{(a)} presents the time series of the drive and response ROs across three phases: training $[20000, 20600]$, prediction $[20601,21200]$, and control $[21201, 21800]$. The errors between the two ROs is illustrated in Fig. \ref{fig:esn_controller_2}\textbf{(b)}. The algorithm in Fig. \ref{fig:ML_c3_architecture} with a RMSE of \num{2.35e-2} outperforms the online control scheme in Fig. \ref{fig:esn_c2} with a RMSE of \num{3.70e-2}.

\begin{figure}[H]
    \centering
        \includegraphics[width=12.0cm,height=6cm]{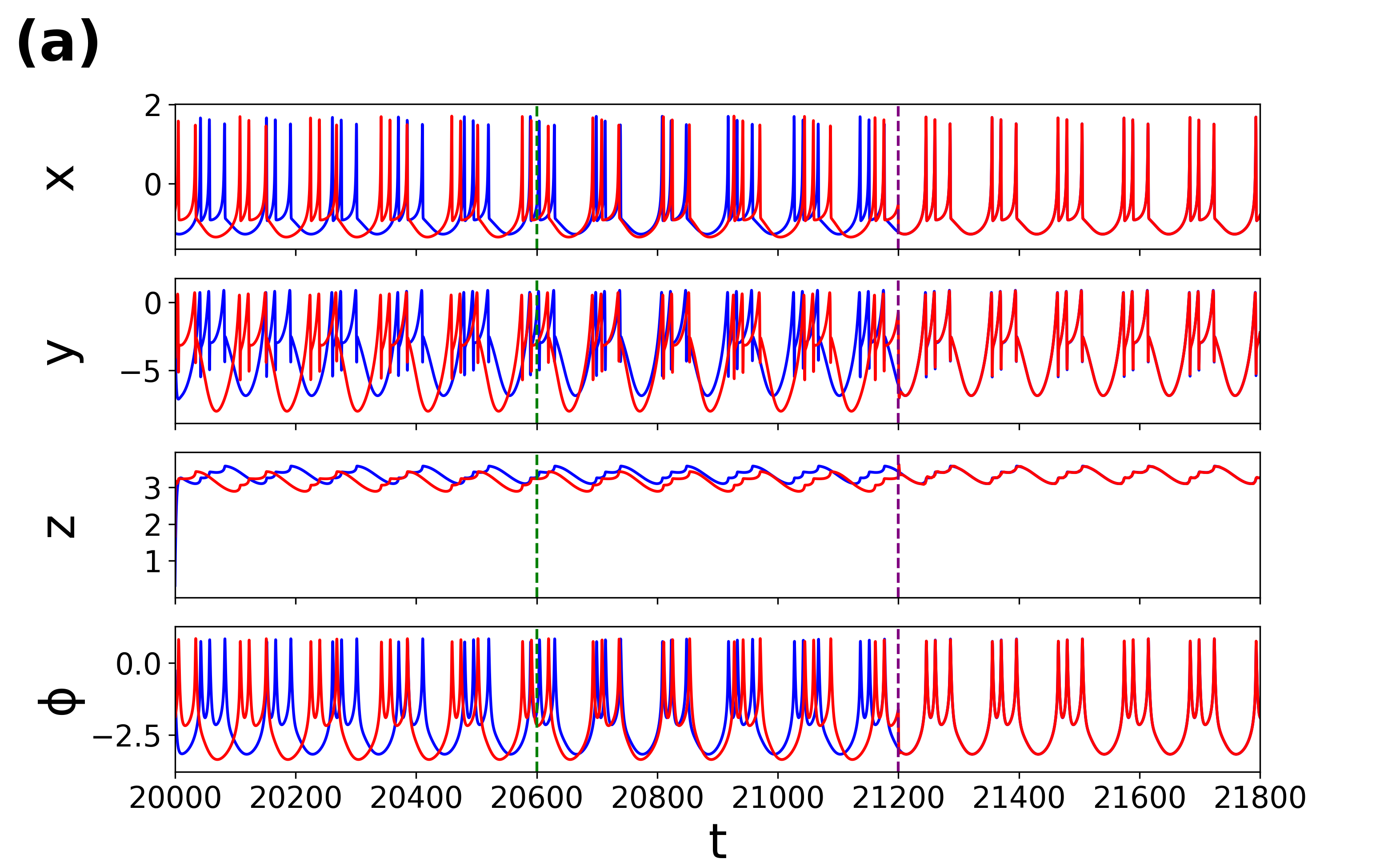}
        \includegraphics[width=12.0cm,height=6cm]{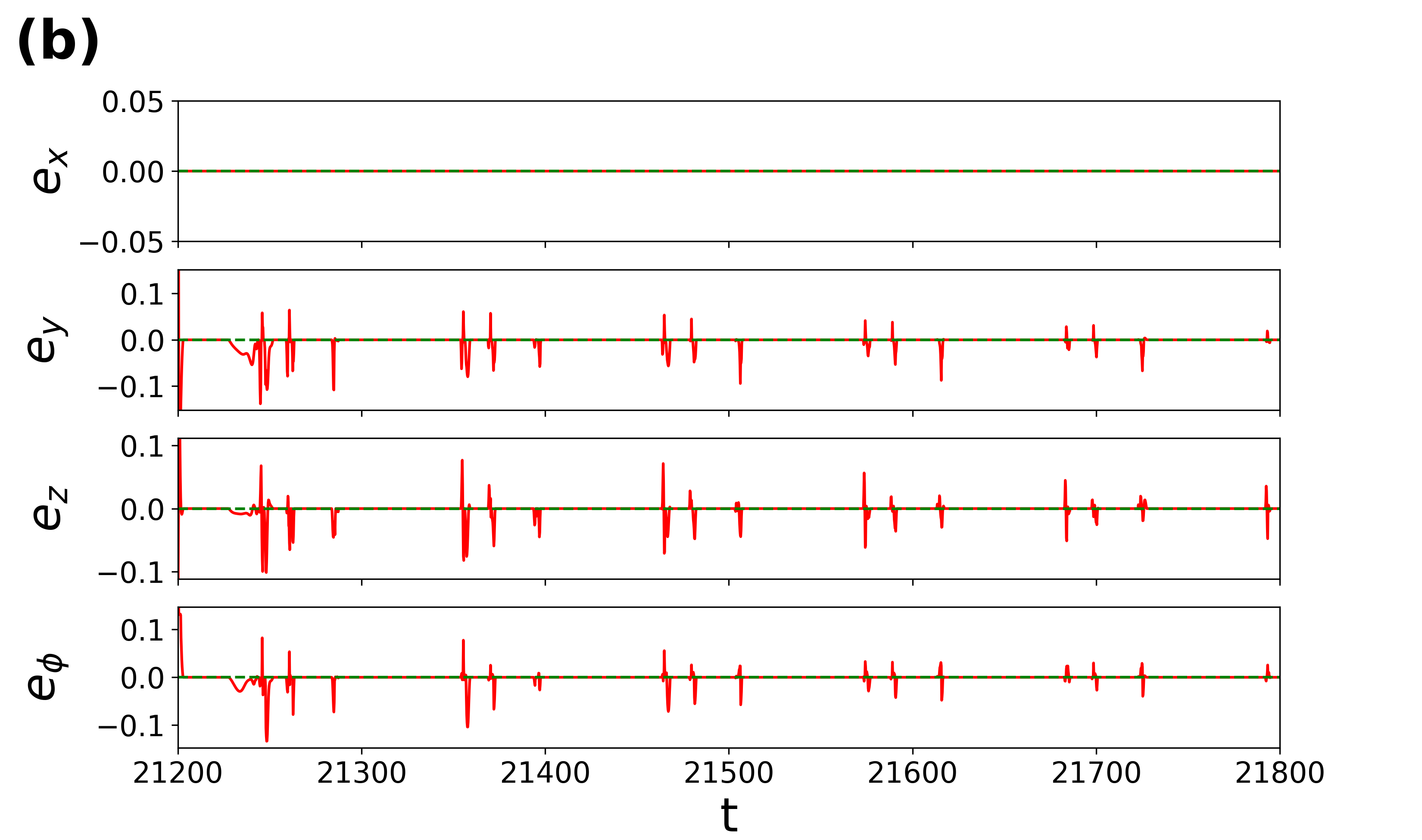}       
    \caption{Performance of the OPC scheme for synchronizing ROs. \textbf{(a)} Time series of the drive RO (blue) and response RO (red), showing three phases: training (before the green vertical line), prediction (between green and purple vertical lines), and control (after the purple vertical line)
(b) Synchronization error between the drive and response ROs in the control phase.}\label{fig:esn_controller_2}
\end{figure}

\section{Summary and conclusion}\label{sec:conclusion}

We investigated two distinct approaches to achieve re\-duc\-ed-order synchronization between the 4D and 5D HR neuron models using a dynamical systems approach (namely, an adaptive control scheme) and a machine learning approach. These methods differ significantly in both their methodology and application. The dynamical systems approach requires explicit mathematical representations of the underlying systems, enabling detailed theoretical analysis. Conversely, machine learning techniques are useful when explicit differential equations are unavailable relying instead on empirical measurements (data) collected from the system of interest.

In the dynamical systems approach to the reduced-order synchronization problem, we applied a Lyapunov-based adaptive control method by constructing a Lyapunov function to establish stability conditions of the reduced-order synchronization manifold. Numerical simulations demonstrated the effectiveness of our proposed adaptive control scheme. We achieved stable reduced-order synchronization across various firing modes (spiking and bursting) for periodic and chaotic time series.

In the machine learning approach to the reduced-order synchronization problem, we applied a combination of reservoir observer echo state networks, online learning, and online predictive learning schemes to achieve reduced-order synchronization.

The data-driven approach, unlike the dynamical systems approach, does not provide mathematical conditions for the stability of the synchronization manifold and offers limited insights into system behavior, particularly regarding the effects of varying bifurcation parameters on synchronization. However, in case of a lack of governing dynamical equations modeling a system the machine learning approach becomes indispensable.

Given that noise in ubiquitous brain neural networks \cite{noise_in_neurons}, an interesting future research direction on the topic would be to investigate reduced-order synchronization for stochastic neuron neural networks from both a dynamical system theory and machine learning perspective.

\section*{Acknowledgments}
This work was supported by the Department of Data Science (DDS), Friedrich-Alexander-Universit\"at Erlan\-gen-Nürnberg, Germany.

\section*{Data and code availability statement} 
The simulation data supporting the findings of this study, along with the code used to produce the results, are publicly accessible here~\cite{code}.

\section*{Declaration of competing interest}
The authors declare that there is no conflict of interest with this article.
%
%

\end{document}